\DeclarePairedDelimiter\abs{\lvert}{\rvert}
\renewcommand{\fnum@figure}{Fig. \thefigure}
\newtheorem{lemma}{Lemma}
\newtheoremstyle{mythm}%
{3pt}
{3pt}
{}
{}
{\bfseries}
{}
{.5em}
{}%
\declaretheoremstyle[
  headpunct=\textup{:},
  bodyfont=\normalfont,
]{myremark}
\theoremstyle{myremark}
\newenvironment{remark}
  {\pushQED{\qed}\remarkex}
  {\popQED\endremarkex}
\titlespacing{\section}{0pt}{1pt}{1pt}
\titlespacing{\subsection}{0pt}{1pt}{1pt}
\titlespacing{\subsubsection}{0pt}{1pt}{1pt}
\titleformat{\subsubsection}[runin]{\normalfont\itshape}{\arabic{subsubsection}.}{0pt}{}[:]
\begin{document}
	%
	\title{\huge Jointly Optimizing Power Allocation and Device Association for Robust IoT Networks under Infeasible Circumstances}
	%
	%
	
	\author{Nguyen~Xuan~Tung, Trinh~Van~Chien, \textit{Member}, Dinh~Thai~Hoang, \textit{Senior Member}, \textit{IEEE}, Won~Joo~Hwang, \textit{Senior Member}, \textit{IEEE}
		\thanks{Nguyen Xuan Tung is with the Department of Information Convergence Engineering, Pusan National University, Busan 46241, Republic of Korea (email: {tung.nguyenxuan1310@pusan.ac.kr}). 
  
        Trinh Van Chien is with the School of Information and Communication Technology (SoICT), Hanoi University of Science and Technology (HUST), Vietnam (email: chientv@soict.hust.edu.vn). 
        
        Dinh Thai Hoang is with the School of Electrical and Data Engineering, University of Technology Sydney, Sydney, NSW 2007, Australia (e-mail: hoang.dinh@uts.edu.au). 
        
        Won-Joo Hwang is with the School of Computer Science and Engineering, Center for Artificial Intelligence Research, Pusan National University, Busan 46241, South Korea (e-mail: wjhwang@pusan.ac.kr).}
	}
	
	%
	%

	\markboth{}
	{Shell \MakeLowercase{\textit{et al.}}: Jointly Optimizing Power Allocation and Device Association for Robust IoT Networks under Infeasible Circumstances}
	%



	\maketitle
	
    
	\begin{abstract}
        Jointly optimizing power allocation and device association is crucial in Internet-of-Things (IoT) networks to ensure devices achieve their data throughput requirements. Device association, which assigns IoT devices to specific access points (APs), critically impacts resource allocation. Many existing works often assume all data throughput requirements are satisfied, which is impractical given resource limitations and diverse demands. When requirements cannot be met, the system becomes infeasible, causing congestion and degraded performance. To address this problem, we propose a novel framework to enhance IoT system robustness by solving two problems, comprising maximizing the number of satisfied IoT devices and jointly maximizing both the number of satisfied devices and total network throughput. These objectives often conflict under infeasible circumstances, necessitating a careful balance. We thus propose a modified branch-and-bound (BB)-based method to solve the first problem. An iterative algorithm is proposed for the second problem that gradually increases the number of satisfied IoT devices and improves the total network throughput. We employ a logarithmic approximation for a lower bound on data throughput and design a fixed-point algorithm for power allocation, followed by a coalition game-based method for device association. Numerical results demonstrate the efficiency of the proposed algorithm, serving fewer devices than the BB-based method but with faster running time and higher total throughput.
	\end{abstract} 
	\begin{IEEEkeywords}
		Dual-objective optimization, IoT, service management, power allocation, device association.
	\end{IEEEkeywords}
	
	%
	\IEEEpeerreviewmaketitle
    \section{Introduction}
    The Internet of Things (IoT) has transformed how we interact with the world, connecting billions of devices to the Internet and enabling seamless information exchange and automation across sectors such as healthcare, agriculture, transportation, and smart cities \cite{Lin2017}. This progress, driven by advances in sensor technology, wireless communication, cloud computing, and data analytics, has made IoT networks integral to modern infrastructure. However, as IoT networks expand, managing the network to maintain a high quality of service poses several challenges \cite{Ouedraogo2022, Sisinni2018}, especially interference management among connected devices, which weakens wireless signals, compromises data transmission, and leads to unreliable connections. Therefore, effective interference management \cite{Zhao2016} is crucial for maintaining signal quality, reducing packet loss, minimizing latency, and ensuring reliable communication, particularly in critical applications like healthcare monitoring \cite{Catarinucci2015}, industrial automation \cite{Xu2014}, and smart grid systems \cite{Lin2017}. IoT networks, typically composed of diverse devices utilizing various wireless technologies and communication protocols \cite{Raza2017}, require robust interference management techniques to ensure smooth and seamless operation \cite{Ozcan2020, Gandotra2018, Gbadamosi2022}. Moreover, interference increases energy consumption as devices contend with competing signals, making advanced power control and interference avoidance mechanisms essential for extending networks' lifespan \cite{Yu2023, Liu2019Interference}.

    A large number of researches has been conducted to manage interference and enhance IoT networks performance \cite{Zhao2016, Yates1995, Park2021, Liu2019_NOMA_Deep_Cognitive_Perspective, Gbadamosi2022}. Approaches such as non-orthogonal multiple access (NOMA) \cite{Park2021, Liu2019_NOMA_Deep_Cognitive_Perspective} have focused on optimizing data throughput but rely on successive interference cancellation (SIC), which demands significant energy and computational resources, making it less practical for many IoT scenarios. Alternatively, a fixed-point power control algorithm was proposed in    \cite{Yates1995} to find the minimum power allocation, enabling users to overcome interference and ensure an acceptable connection. Despite these advances, many studies assumed sufficient resources to meet the data rate requirements for all IoT devices simultaneously \cite{Shen2018_PractionalProgramHandling, Pang2008Empty_nonEmpty_game, Obiedollah2020_FeasibilityCheck}. This assumption is often invalid in practice due to power and data rate constraints. When the system cannot meet these requirements, we define this circumstance as ``infeasible". Once infeasible issues happen, existing approaches designed for feasible conditions become inapplicable, rendering the system unresponsive. This can lead to congestion, transmission delays, and degraded performance \cite{JAIN2022100678}. Moreover, IoT networks often face restricted resources and intense interference \cite{Lin2017}, which increases the likelihood of encountering infeasible conditions. Addressing the infeasible issue is crucial to ensuring the practical viability and scalability of IoT deployments, especially for critical applications like healthcare and traffic management \cite{Fuqaha2015}. For example, at a parking place with IoT devices like cameras and parking sensors, encountering infeasible circumstances could prevent the system from serving users, causing disruptions in traffic management and safety operations. To avoid such disruptions, it is essential to strategically address infeasibility by serving a subset of IoT devices to maintain critical services.
    
    This paper introduces an innovative framework aimed at enhancing the robustness of IoT systems under infeasible circumstances. Specifically, we examine an IoT network where multiple APs communicate with IoT devices in the downlink utilizing the same frequency resource. Each AP transmits signals to multiple IoT devices, while an IoT device receives data from one AP only. We assume perfect channel state information (CSI) is available, similar to many other related works in the literature such as \cite{Gbadamosi2022, Pang2008Empty_nonEmpty_game, Obiedollah2020_FeasibilityCheck, Stridh2006Interference_CongestionControl, Antonioli2019IBC, Yu2012CoordinatedScheduling, Chien2020CongestionICC, Chien2021ScatteringChannelICC, van2021uplink, Chi2005}. This assumption, while idealistic, is essential for developing a robust framework addressing the infeasible issue. By first establishing a baseline performance under these ideal conditions, we lay the groundwork for future research that will relax these assumptions and explore more realistic scenarios, including imperfect CSI. The system information is processed centrally to manage resources effectively, with an IoT device's QoS requirement considered satisfied when its data throughput exceeds its demand. A critical aspect of this framework is the joint optimization of AP-IoT device associations and power allocation. Properly associating IoT devices with APs and allocating power resources efficiently are keys to optimizing signal strength, coverage, and interference management \cite{Jamali9542950}. However, to the best of our knowledge, there was no literature investigating the effect of AP-IoT device association under infeasible circumstances.

    
    
    In particular, we divide the IoT devices into two sets. The first set comprises IoT devices that the system can satisfy the QoS requirement, and the second set consists of the remaining IoT devices. We aim to maximize the number of IoT devices in the satisfied set and serve them first, with the remaining devices being served later. By doing so, the system handles potentially infeasible situations more effectively, maintaining system functionality and efficient resource allocation. We then consider two optimization problems for power allocation and AP-IoT device association: the first is to maximize the number of satisfied IoT devices, and the second is a dual-objective optimization problem to optimize both the number of satisfied IoT devices and the total network throughput. The first problem is solved by utilizing a modified BB algorithm. This method provides a straightforward approach for maximizing the number of satisfied IoT devices while also offering a simple mechanism to assess system infeasibility. For the dual-objective problem, we recognize a conflict between the objectives: increasing the number of satisfied devices can reduce overall network throughput due to stretched resources, while prioritizing network throughput may decrease the number of satisfied devices. To resolve this problem, our approach prioritizes optimizing the number of satisfied IoT devices. We propose an iterative algorithm to find and add IoT devices into the satisfied set. This is achieved by alternately optimizing power allocation and AP-IoT device association. The method effectively increases the number of IoT devices meeting their QoS requirements by reallocating resources strategically, ensuring that more devices are served under the system's constraints. The non-convex structure of the data rate formula is managed using the log-approximation method \cite{Singh7000604}. Extensive simulations demonstrate that the modified BB algorithm can serve more IoT devices but with a lower total network throughput and a higher running time. These limitations motivate future work to improve the algorithm's computational efficiency and scalability for better performance in large-scale IoT networks. Besides, compared with the equal power allocation scheme, the proposed power allocation algorithm can improve QoS satisfaction by approximately $50$\%. Our main contributions can be summarized as follows:
    \begin{itemize}
        \item We propose a novel framework to address the challenges of infeasible circumstances in IoT networks where it is impossible to satisfy the requirements of all IoT devices simultaneously. Our framework introduces two optimization problems: maximizing the number of satisfied IoT devices and a dual-objective problem that optimizes both device satisfaction and total network throughput. This approach ensures efficient operation under challenging conditions, optimizing both individual device satisfaction and overall network performance.
        \item We develop a branch-and-bound-based algorithm to address the number of satisfied IoT device maximization problems and identify infeasible circumstances. For the dual-objective problem, we proposed an iterative algorithm that increases the number of satisfied devices by solving total data throughput maximization problems with respect to the power allocation and AP-IoT device association. We introduce a low-complexity dual fixed-point algorithm based on the Karush-Kuhn-Tucker conditions for power allocation and a coalition game model to determine the optimal device association strategy. 
        \item We conduct extensive simulations and benchmark comparisons to validate the efficiency of the proposed framework and provide deeper insights. The results reveal that branch-and-bound can provide services to more IoT devices when the requirements from IoT devices are not too large, albeit with higher running time. For the dual-objective problem, the proposed AP-IoT device association outperforms the geometrical-based AP association, which assigns each IoT device to the closest AP. Additionally, the proposed power allocation surpasses the fixed power allocation, where equal power is distributed to all devices.
    \end{itemize}
	
	\emph{Paper Organization:} The rest of the paper is organized as follows. Related works are reviewed in \ref{Section: Related Work}. In Section~\ref{SystemModel}, we describe the system model and formulate the optimization problem with respect to the power allocation and the AP-IoT device association. In Section \ref{OptimiaztionApproach}, we propose an efficient algorithm that obtains the solution to the considered optimization problem by exploiting the Lagrangian and standard interference function. Numerical results are obtained and discussed in Section \ref{ExperimentalEvaluation}. Finally, Section \ref{Conclusion} concludes the paper.

    \emph{Notation:} We use boldface lowercase letters and boldface uppercase letters to denote vectors and matrices, respectively. Let $\mathbf{a}^T$ denote the transpose of vector $\boldsymbol{a}$. We define the circularly symmetric complex Gaussian distribution with zero mean and variance $\sigma^2$ by $\mathcal{CN}(0,\sigma^2)$.
 
    \section{Related Work}\label{Section: Related Work}
    Despite the importance of the infeasible issue in IoT networks, research on this challenge is still limited. Some studies, such as those referenced in \cite{Pang2008Empty_nonEmpty_game, Obiedollah2020_FeasibilityCheck}, discussed the potential for infeasibility and provided conditions to ensure the system avoids such scenarios. Therefore, these works primarily focus on preventing infeasibility rather than offering solutions when the system becomes infeasible. The paper \cite{Stridh2006Interference_CongestionControl} addressed the infeasibility issue in maximizing the total network rate under minimum data rate constraints but assumed an infinite power budget, which is impractical for real-world applications. Indeed, when a finite power budget is considered, the system becomes more complex, and the proposed approach is not suitable anymore. 

    Further studies have addressed infeasible problems under both data rate constraints and power limitations. For instance, \cite{Chi2005} suggested adjusting the data throughput of users based on channel gain and interference levels to cope with infeasibility. However, lowering the QoS requirement is unsuitable for non-delay-tolerant or precise data requirement applications. Similarly, the papers \cite{Antonioli2019IBC} and \cite{Yu2012CoordinatedScheduling} also tackled the problem of infeasibility under limited power budgets and minimum data throughput requirements. These studies addressed infeasibility by using a BB method to select a subset of users, ensuring they meet minimum data rates while optimizing overall system performance, but these approaches prioritize system performance over maximizing user satisfaction. In \cite{Chien2020CongestionICC, Chien2021ScatteringChannelICC, van2021uplink}, the authors addressed infeasibility in uplink cell-free networks with power and data throughput constraints. Specifically, \cite{Chien2020CongestionICC} proposed a fixed-point algorithm to allocate power resources to users based on the data throughput constraints. The maximum power is assigned to users who cannot meet the required data throughput. The works \cite{Chien2021ScatteringChannelICC, van2021uplink} proposed a new update for users who cannot meet the required data throughput to mitigate interference. Nevertheless, similar to \cite{Antonioli2019IBC} and \cite{Yu2012CoordinatedScheduling}, these studies focus on optimizing overall objectives rather than maximizing the number of satisfied users.
    
    In contrast, our approach takes a different perspective. We prioritize maximizing the number of satisfied IoT devices, even at the cost of lower overall throughput. This strategy enhances reliability and ensures that more devices can operate correctly, which is crucial for applications where individual device performance is critical. Additionally, while the joint AP-IoT device association and power allocation has been recognized for their potential to enhance overall system performance \cite{Jamali9542950, Sun7070670, Qian6476076}, existing studies have not explored their impact under infeasible circumstances. Our framework optimizes both power allocation and AP-IoT device association, directly addressing infeasibility to maintain system functionality even under constrained resources, thereby filling the gaps identified in previous research.


	\section{System Models and Optimization Problem Formulation}\label{SystemModel}
 
	We consider an IoT system in which $K$ APs communicate with $N$ IoT devices in the downlink transmission as illustrated in Fig.~\ref{SystemModelFigure}. Similar to the concept of other NB-IoT networks, all APs and IoT devices are located in a small area due to the limitation in communication range and the power budget. APs and IoT devices are equipped with a single antenna. All communication links between APs and IoT devices share the same time and frequency resource. Therefore, interference appears in this system, and the IoT devices apply the maximum likelihood estimator to detect the desired signal. The NOMA is not suitable for this system since the IoT devices' resources are not sufficient to handle the SIC technique \cite{Estrello22_SIC}. We respectively denote the set of IoT devices and APs by
     \begin{equation}
       \mathcal{N}\triangleq \{1, \ldots,N\} \mbox{ and } \mathcal{K}\triangleq \{1, \ldots,K\}.
    \end{equation}
    In the considered network, a centralized server is utilized to calculate the resource allocation based on the gathered information consisting of the power budget of APs and the CSI between APs and IoT devices. All APs are centrally managed and controlled by a server. Whenever there are changes within the network, the APs promptly relay this information back to the server and await further instructions. As we consider the downlink communications in an IoT network, we assume an AP can transmit data to multiple IoT devices simultaneously. However, one IoT device can only be communicated with one AP due to IoT devices' hardware constraints. 
    \begin{figure}[t]
    		\includegraphics[trim=0.2cm 0.4cm 0cm 0cm, clip=true, width=3.5in]{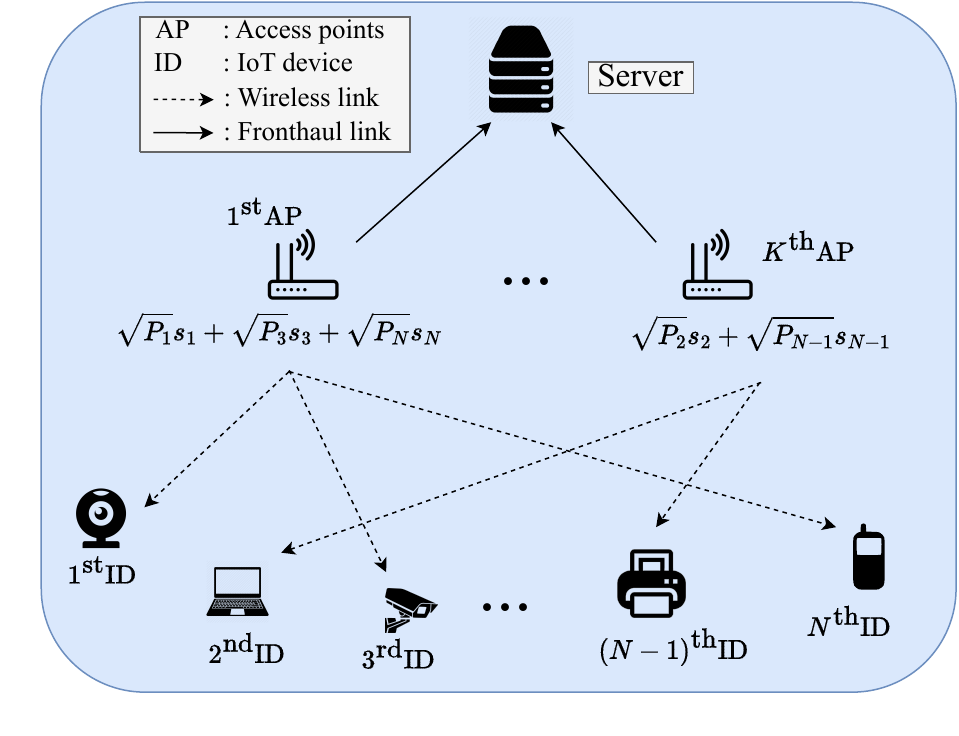}
    		\caption{The considered model of a downlink IoT network with the participation of multiple APs and IoT devices.}
    		\label{SystemModelFigure}
            \vspace{-2.0em}
    \end{figure} 
    \subsection{Signal Model}
    During the downlink data transmission, the $n$-th IoT device receives the desired signal $s_n \in \mathbb{C}$ with $|s_n|^2 =1$ from its associated AP. The allocated power for the $n$-th IoT device is constrained by
     \begin{equation} \label{eq:PowerCst}
     0\leq \sum_{n\in \mathcal{N}} \mu_{k,n}P_{n}\leq  P_{k}^{\textrm{max}},
     \end{equation}
     where $P_{k}^{\textrm{max}}$ is the maximum power that the associated AP can allocate to the signal. We define $\mathbf{p}$ as the vector of power allocation of all IoT devices, $\mathbf{p}=[P_1,\ldots,P_N]^T \in \mathbb{R}^{N}$. The binary variable $\mu_{k,n}$ represents the AP-IoT device association. Specifically, in \eqref{eq:PowerCst}, if $\mu_{k,n}=1$, the desired signal of the $n$-th IoT device is transmitted from the $k$-th AP. Otherwise, $\mu_{k,n}=0$ means no communication between the $n$-th IoT device and the $k$-th AP. The received signal at the $n$-th IoT device, denoted by $y_n \in \mathbb{C}$ is given as
    \begin{multline}\label{ReceiveSN}
        y_{n}=\underset{\textrm{Desired signal}}{\underbrace{\mu_{k,n} h_{k,n} \sqrt{P_{n}} s_{n}}}+ \\ \underset{\textrm{Mutual interference}}{\underbrace{\sum_{n' \neq n, n\in \mathcal{N}}\sum_{k\in \mathcal{K}} \mu_{k,n'} h_{k,n} \sqrt{P_{n'}} s_{n'}}} + \underbrace{n_{n}}_{\textrm{Noise}},
    \end{multline}
    where $h_{k,n}$ denotes the propagation channel between the $n$-th IoT device and the $k$-th AP. We define $\boldsymbol{h}_{n}$ as the vector of the channel between all APs and the $n$-th AP, $\boldsymbol{h}_{n}=[h_{1,n},\ldots,h_{K,n}]^T$. In the right-hand side of \eqref{ReceiveSN}, the first term represents the desired signal for the $n$-th IoT device. The second term denotes mutual interference due to the reusing of frequency resources. The last term $n_{n}$ is the additive white Gaussian noise (AWGN) at the $n$-th IoT device, which is distributed as $\mathcal{CN}(0,\sigma^2)$, where the variance is calculated as $\sigma^2=B N_0$ with $N_0$ and $B$ being the power spectral density of the noise and the bandwidth of each IoT device, respectively. Since each IoT device is associated with only one AP, the following constraints are fulfilled,
    \begin{equation}
    \sum_{k\in \mathcal{K}}\mu_{k,n}=1, \forall n\in \mathcal{N}.
    \end{equation}
    The AP-IoT devices association between all APs and the $n$-th AP is denoted by $\boldsymbol{\mu}_n=[\mu_{1,n},\ldots,\mu_{K,n}]^T \in \mathbb{B}^K$, and we have $\boldsymbol{\Psi}=[\boldsymbol{\mu}_1,\ldots,\boldsymbol{\mu}_N] \in \mathbb{B}^{K \times N}$. Note that $\pmb{\Psi}$ and $\mathbf{p}$ contain the optimization variables handled hereafter. Since all the APs transmit their signals to the IoT devices at the same time on the same frequency,  the signal-to-interference-plus-noise (SINR) ratio for the $n$-th IoT device is computed as
    \begin{equation}\label{SINR}
        \begin{aligned}
    		\gamma_{n}(\boldsymbol{\Psi},\textbf{\textrm{p}})=&\frac{ \sum\limits_{k\in\mathcal{K}}|\mu_{k,n}h_{k,n}|^2 P_{n}}{\sum\limits_{n'\neq n, n'\in\mathcal{N}}\sum\limits_{k\in\mathcal{K}}|\mu_{k,n'}h_{k,n}|^2 P_{n'}+\sigma^2}\\
            =&\frac{ |\boldsymbol{\mu}_{n}^T \boldsymbol{h}_{n}|^2 P_{n}}{\underset{{n'\neq n,n'\in \mathcal{N}}}{\sum}|\boldsymbol{\mu}_{n'}^T\boldsymbol{h}_{n}|^2 P_{n'}+\sigma^2},
        \end{aligned} 
    \end{equation} 
    where the numerator is come from the fact that   $|\mu_{1,n}h_{1,n}+...+\mu_{K,n}h_{K,n}|=|\boldsymbol{\mu}_{n}^T \boldsymbol{h}_n|=|h_{k,n}|$ if $\mu_{k,n}=1$ and $\mu_{k',n}=0, \forall k'\neq k$. Similarly, in the denominator, we have $\sum\limits_{k\in\mathcal{K}}|\mu_{k,n'}h_{k,n}|^2 P_{n'} = |\boldsymbol{\mu}_{n'}^T\boldsymbol{h}_{n}|^2 P_{n'}$.
    The data throughput of the $n$-th IoT device is then computed as
    	\begin{equation}\label{RateCalculation}
    		R_{n}(\boldsymbol{\Psi},\textbf{\textrm{p}})=\log_2 \left(1+\gamma_{n}(\boldsymbol{\Psi},\textbf{\textrm{p}})\right).
    	\end{equation}
    After that, the total network throughput can be determined as follows:
    \begin{equation}\label{TotalRate}
    		\begin{aligned}
    			R_{\textrm{tot}}(\boldsymbol{\Psi},\textbf{\textrm{p}})=\sum_{n\in \mathcal{N}}R_n(\boldsymbol{\Psi},\textbf{\textrm{p}}). 
    		\end{aligned}
    \end{equation}
     
    \begin{remark}\label{remark1}
    It is important to investigate the data throughput of each IoT device under the optimal AP-IoT device association that is of practical interest. Accordingly, the system can optimize radio resources such as power budget or allocating frequency. The IoT network model considered in this paper can vary based on specific applications, for example, in traffic management \cite{Ning2019}, relay systems \cite{Lyu2020, Singh2015}, or the Tactile Internet \cite{Budhiraja2019}, where IoT devices request the APs to send information as quickly as possible for timely responses to the environment. 
    \end{remark}
    \subsection{Problem Formulation} 
    \subsubsection{Conventional joint power allocation and device association optimization problem}
    An important key task of every IoT network is to maximize the overall network throughput while guaranteeing that every IoT device can meet its required data throughput. Due to the limited power budget, the restriction in the association between APs and IoT devices and the interference between IoT devices, the power allocation and the AP-IoT devices association have significant impacts on optimizing the system performance. Moreover, optimizing the power allocation depends on the given AP-IoT devices association, and vice versa. Therefore, jointly optimizing both the power allocation and the AP-IoT devices association is needed. The total network throughput optimization for a downlink communication IoT system is then formulated as:
    \begin{subequations}\label{Problemv1}
        \begin{align}
            && \underset{\substack{ \boldsymbol{\Psi}, \textbf{\textrm{p}} }}{\textrm{maximize}} & \quad R_{\textrm{tot}}(\boldsymbol{\Psi},\textbf{\textrm{p}}), \\
            &&\textrm{subject to}\hspace{0.15cm}&\quad\underset{n\in \mathcal{N}}{\sum} \mu_{k,n}P_{n}\leq  P_{k}^{\textrm{max}},  \forall k\in \mathcal{K},\label{V1b} \\
            &&&\quad R_n(\boldsymbol{\Psi},\mathbf{p})\geq R_n^{\textrm{thr}}, \forall n\in \mathcal{N},\label{V1c} \\
            &&& \quad\sum_{k\in \mathcal{K}}\mu_{k,n}=1, \forall n\in \mathcal{N}, \label{V1d} 
        \end{align} 
    \end{subequations} 
    where $R_n^{\textrm{thr}}$~[Mbps] is data throughput required by the $n$-th IoT device. The constraints \eqref{V1b} represent the limited transmit power of each AP. The constraints \eqref{V1d} are to guarantee that each IoT device is associated with only one AP. The above problem is non-convex due to the non-convex form of the data throughput of IoT devices at the objective function and the required data throughput constraints. Note that the data throughput of every IoT device is reduced significantly because of the mutual interference, especially with large-size networks. Moreover, the power budget of each AP is limited, therefore, the constraints in \eqref{V1c} are difficult to be satisfied and the problem in \eqref{Problemv1} can fall into an infeasible circumstance.
    
    In case the AP power is too limited, the network throughput maximization without rate constraints can be considered
    \begin{subequations}\label{Problemv1-2}
        \begin{align}
            && \underset{\substack{ \boldsymbol{\Psi}, \textbf{\textrm{p}} }}{\textrm{maximize}} & \quad R_{\textrm{tot}}(\boldsymbol{\Psi},\textbf{\textrm{p}}) ,\\
            &&\textrm{subject to}\hspace{0.15cm}&\quad\underset{n\in \mathcal{N}}{\sum} \mu_{k,n}P_{n}\leq  P_{k}^{\textrm{max}},  \forall k\in \mathcal{K}, \\
            &&& \quad\sum_{k\in \mathcal{K}}\mu_{k,n}=1, \forall n\in \mathcal{N},
        \end{align}
    \end{subequations} 
    However, in this case, the system tends to allocate most of the power resources to the IoT devices with the best channel gains to ultimately maximize the overall throughput. Consequently, the remaining IoT devices experience very low QoS or even receive no service from APs. 
    \subsubsection{The number of satisfied IoT devices maximization problem and the dual-objective problem for the power allocation and the AP association under infeasible circumstances}
    As discussed in the previous subsection, the conventional problems in \eqref{Problemv1} and \eqref{Problemv1-2} have coherent weaknesses. The key issue is the difficulty of satisfying the individual requirements for every IoT device. Therefore, it is better to identify and serve a set of IoT devices in the system's capability only. Accordingly, we propose to divide IoT devices into two sets. The first set contains IoT devices meeting all the constraints, and the remaining IoT devices belong to the second set. To increase the number of satisfied IoT devices in the system, we formulate the following problem: 
    \begin{subequations}\label{ProblemMaxQ}
        \begin{align}
            && \underset{\substack{ \boldsymbol{\Psi}, \textbf{\textrm{p}} }}{\textrm{maximize}} & \quad |\mathcal{Q}(\boldsymbol{\Psi},\textbf{\textrm{p}})| ,\\
            &&\textrm{subject to}\hspace{0.15cm}&\quad\underset{n\in \mathcal{N}}{\sum} \mu_{k,n}P_{n}\leq  P_{k}^{\textrm{max}}, \forall k\in \mathcal{K}, \label{VQ1b}\\
            &&&\quad R_n(\boldsymbol{\Psi},\mathbf{p})\geq R_n^{\textrm{thr}}, \forall n\in \mathcal{Q}(\boldsymbol{\Psi},\textbf{\textrm{p}}),\label{VQ1c} \\
            &&& \quad\sum_{k\in \mathcal{K}}\mu_{k,n}=1, \forall n\in \mathcal{N},\label{VQ1d} 
        \end{align}
    \end{subequations}
    where $\mathcal{Q}(\boldsymbol{\Psi},\textbf{\textrm{p}})$ is the set of satisfied IoT devices defined as
    \begin{equation}\label{QsetDeterminantEq}
        \mathcal{Q}(\boldsymbol{\Psi},\textbf{\textrm{p}})=\{n| n\in \mathcal{N}, R_n(\boldsymbol{\Psi},\textbf{\textrm{p}})\geq R_n^{\textrm{thr}}\}.
    \end{equation}
    The problem in \eqref{ProblemMaxQ} is a combinatorial problem with the mixed discrete and continuous variables of the AP-IoT devices association and the power allocation. There is a total of $\sum\limits_{i=1}^N C^i_N$\footnote{$C^i_N = \frac{N!}{i! (N-i)!}$ is the number of $i$-combinations from the set of $N$ elements. Here, $N! = \prod^{N-1}_{j=0}(N-j).$} candidates of selecting IoT devices for the satisfied set and $N^K$ candidates of AP-IoT devices association, running the exhaustive search is inapplicable, especially with large network sizes. Therefore, finding the optimal solution for the problem in \eqref{ProblemMaxQ} is impractical. We can change the objective function of \eqref{ProblemMaxQ} to maximize the minimum data throughput of all IoT devices, namely the max-min fairness \cite{Chien2020}. However, the max-min fairness problem in large-scale networks as IoT networks usually leads to a zero rate. Meanwhile, the problem in \eqref{ProblemMaxQ} ensures the minimum data throughput of an IoT device is greater than the request. The feasible set of the problem in \eqref{ProblemMaxQ} is a nonempty set as claimed in Lemma \ref{Lemma:QsetSolution}.
    \begin{lemma}\label{Lemma:QsetSolution}
        For a given number of APs and the limited power budget at each AP, there always exists a solution to  problem~\eqref{ProblemMaxQ}.
    \end{lemma}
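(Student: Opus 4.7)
The plan is to show that the feasible region of problem \eqref{ProblemMaxQ} is always non-empty, and then argue that the objective attains its maximum on this region. The crucial observation driving the argument is that the rate constraint \eqref{VQ1c} is indexed over the satisfied set $\mathcal{Q}(\boldsymbol{\Psi},\mathbf{p})$ itself, which is determined by the decision variables through the definition \eqref{QsetDeterminantEq}. Consequently, when $\mathcal{Q}(\boldsymbol{\Psi},\mathbf{p})$ happens to be empty, the constraint \eqref{VQ1c} is vacuously satisfied, and the remaining requirements \eqref{VQ1b} and \eqref{VQ1d} are purely structural in nature.

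Building on this observation, I would exhibit a concrete admissible point. For instance, fix any mapping that assigns each IoT device to exactly one AP, e.g., $\mu_{1,n}=1$ and $\mu_{k,n}=0$ for $k\neq 1$ and all $n\in\mathcal{N}$, and set $\mathbf{p}=\mathbf{0}$. Then \eqref{VQ1d} holds by construction, and \eqref{VQ1b} reduces to $0\leq P_k^{\textrm{max}}$, which is valid for every $k\in\mathcal{K}$. With all powers set to zero, \eqref{SINR} and \eqref{RateCalculation} give $R_n(\boldsymbol{\Psi},\mathbf{p})=0$ for every $n$, so if $R_n^{\textrm{thr}}>0$ then $\mathcal{Q}(\boldsymbol{\Psi},\mathbf{p})=\emptyset$ and \eqref{VQ1c} is vacuous. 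Thus the feasible set is non-empty, with objective value $0$ attainable.

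Next, I would observe that the objective $|\mathcal{Q}(\boldsymbol{\Psi},\mathbf{p})|$ takes values in the finite integer set $\{0,1,\ldots,N\}$, regardless of how the continuous variable $\mathbf{p}$ and the binary variable $\boldsymbol{\Psi}$ range. Therefore the image of the feasible set under the objective is a non-empty subset of $\{0,1,\ldots,N\}$, which must contain its supremum. Pulling back any argument achieving that supremum yields a feasible point at which the maximum is attained, which completes the argument.

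There is really no significant technical obstacle here: the delicate point is purely conceptual, namely recognising that the constraint set is defined in terms of $\mathcal{Q}(\boldsymbol{\Psi},\mathbf{p})$ rather than the full device set $\mathcal{N}$, so that an empty satisfied set trivially satisfies the rate requirements. Once this is flagged, existence follows from the discrete, bounded nature of the objective without any continuity, compactness, or convexity assumptions on the joint feasible region.
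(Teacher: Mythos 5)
Your proposal is correct and follows essentially the same route as the paper's own proof: both rest on the observation that the rate constraints \eqref{VQ1c} only bind on the self-defined satisfied set $\mathcal{Q}(\boldsymbol{\Psi},\mathbf{p})$, so feasibility is governed solely by \eqref{VQ1b} and \eqref{VQ1d}, which any valid association with an admissible (e.g., zero) power vector satisfies. Your explicit attainment step — the objective ranges over the finite set $\{0,1,\ldots,N\}$, so the maximum is achieved — is a small but welcome tightening of the paper's looser "the feasible set is non-empty, hence solutions exist" conclusion.
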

    \begin{proof}
        For a given feasible point $(\boldsymbol{\Psi}', \textbf{\textrm{p}}')$ satisfying both \eqref{VQ1b} and \eqref{VQ1d}, the satisfied IoT devices set is determined based on \eqref{QsetDeterminantEq}, $\mathcal{Q}(\boldsymbol{\Psi}',\textbf{\textrm{p}}')$. In the worst case, $\mathcal{Q}(\boldsymbol{\Psi}',\textbf{\textrm{p}}')=\emptyset$, and thus the optimal value of \eqref{ProblemMaxQ} is $|\mathcal{Q}(\boldsymbol{\Psi}',\textbf{\textrm{p}}')|=|\emptyset|=0$. In contrast, $0< |\mathcal{Q}(\boldsymbol{\Psi}',\textbf{\textrm{p}}')| < K$ indicates that a subset of IoT devices will be served by at least their quality of service requirements. If $|\mathcal{Q}(\boldsymbol{\Psi}',\textbf{\textrm{p}}')| = K$, the system can guarantee individual services to all the IoT devices. Note that the feasible set of problem \eqref{ProblemMaxQ} depends on the constraints \eqref{VQ1b} and \eqref{VQ1d}. As long as the system exists at least one AP and the power budget is greater than zero, the feasible set is non-empty. Consequently, the problem in \eqref{ProblemMaxQ} always exists solutions. The proof is completed.
    \end{proof}

    Note that this work focuses on addressing the infeasible problem of \eqref{Problemv1}. Therefore, the relationship between optimizing $R_{\textrm{tot}}(\boldsymbol{\Psi},\textbf{\textrm{p}})$ in \eqref{Problemv1-2} and optimizing $|\mathcal{Q}(\boldsymbol{\Psi},\textbf{\textrm{p}})|$ in \eqref{ProblemMaxQ} can be expressed as in Lemma \ref{ConflictMetrics} as follows.
    \begin{lemma}\label{ConflictMetrics}
    Given a fixed resource set, which includes the number of APs and the available power budget at each AP, the system is presumed to operate under infeasible conditions. The optimal solution for maximizing the total network throughput, $R_{\textrm{tot}}(\boldsymbol{\Psi},\textbf{\textrm{p}})$, without the rate constraint (similar to that considered in \eqref{Problemv1-2}) cannot ensure the optimality in the number of satisfied IoT devices, and vice versa. Consequently, optimizing the total network throughput, $R_{\textrm{tot}}(\boldsymbol{\Psi},\textbf{\textrm{p}})$, without the rate constraint (similar to that considered in \eqref{Problemv1-2}) and optimizing the number of satisfied IoT devices, $|\mathcal{Q},(\boldsymbol{\Psi},\textbf{\textrm{p}})|$ (similar to that considered in \eqref{ProblemMaxQ}) are conflict.
    \end{lemma}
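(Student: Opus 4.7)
The plan is to establish the conflict by producing, for each direction, a problem instance in which an optimizer of one objective is strictly suboptimal for the other. Since both \eqref{Problemv1-2} and \eqref{ProblemMaxQ} share the same feasible set (determined by \eqref{VQ1b} and \eqref{VQ1d}), it suffices to exhibit a feasible allocation that outperforms the claimed optimizer under the complementary metric.

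First, I would isolate the structural reason the two objectives pull in different directions. Because $\log_2(1+\gamma_n)$ is concave and monotone in $\gamma_n$, and because increasing $P_n$ boosts the signal of device $n$ while inflating the interference seen by every other device, an optimizer of $R_{\textrm{tot}}(\boldsymbol{\Psi},\mathbf{p})$ in \eqref{Problemv1-2} tends to concentrate the per-AP power budget on the IoT devices with the strongest effective channel gains $|\boldsymbol{\mu}_n^T\boldsymbol{h}_n|^2$, as the authors note right after \eqref{Problemv1-2}. Crucially, \eqref{Problemv1-2} contains no threshold $R_n^{\textrm{thr}}$, so the optimizer is indifferent to whether a given device clears its QoS bar.

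Second, I would construct a minimal counterexample in the first direction. Take $K=1$, $N=2$, with channel magnitudes $|h_{1,1}|^2\gg|h_{1,2}|^2$, identical thresholds $R_1^{\textrm{thr}}=R_2^{\textrm{thr}}=R^{\textrm{thr}}$, and a power budget $P_1^{\max}$ chosen small enough that the instance is infeasible in the sense of \eqref{Problemv1}. For a throughput-optimal allocation $(\boldsymbol{\Psi}^\star,\mathbf{p}^\star)$ of \eqref{Problemv1-2}, almost all power is assigned to device $1$, yielding $|\mathcal{Q}(\boldsymbol{\Psi}^\star,\mathbf{p}^\star)|=1$. I would then pick a split allocation $(\boldsymbol{\Psi}',\mathbf{p}')$ that equalizes the two SINRs just above $2^{R^{\textrm{thr}}}-1$; by choosing $P_1^{\max}$ at the boundary where both devices can simultaneously clear the threshold, we obtain $|\mathcal{Q}(\boldsymbol{\Psi}',\mathbf{p}')|=2>|\mathcal{Q}(\boldsymbol{\Psi}^\star,\mathbf{p}^\star)|$, showing that $(\boldsymbol{\Psi}^\star,\mathbf{p}^\star)$ is not optimal for \eqref{ProblemMaxQ}.

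For the reverse direction, I would reuse the same instance: since the allocation $(\boldsymbol{\Psi}',\mathbf{p}')$ satisfies both devices at the threshold, it devotes power to device $2$ even though doing so strictly reduces $R_{\textrm{tot}}$ compared with dumping the entire budget onto device $1$. Hence $R_{\textrm{tot}}(\boldsymbol{\Psi}',\mathbf{p}')<R_{\textrm{tot}}(\boldsymbol{\Psi}^\star,\mathbf{p}^\star)$, so a maximizer of $|\mathcal{Q}|$ need not maximize $R_{\textrm{tot}}$. Together the two directions establish that the two optimization problems cannot share a common optimum in general, which is exactly the conflict claimed. The only delicate step is calibrating $P_1^{\max}$, $|h_{1,1}|$, $|h_{1,2}|$, and $R^{\textrm{thr}}$ so that (i) the instance is infeasible for \eqref{Problemv1}, (ii) the equal-SINR split is feasible under \eqref{VQ1b}, and (iii) the throughput-optimum strictly concentrates on device $1$; I expect this bookkeeping, rather than any conceptual difficulty, to be the main obstacle.
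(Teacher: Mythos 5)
There is a genuine gap, and it is not mere bookkeeping as you suggest at the end: in your minimal instance the calibration conditions (i) and (ii) are mutually exclusive. With $K=1$ and $N=2$, the lemma's standing hypothesis is that the system is infeasible, i.e.\ no allocation satisfying \eqref{VQ1b} and \eqref{VQ1d} can meet $R_n(\boldsymbol{\Psi},\mathbf{p})\geq R_n^{\textrm{thr}}$ for \emph{all} $n$ simultaneously. But your split allocation $(\boldsymbol{\Psi}',\mathbf{p}')$ is explicitly chosen so that both devices clear their thresholds under the power budget, which means $|\mathcal{Q}(\boldsymbol{\Psi}',\mathbf{p}')|=N=2$ and hence the instance is feasible for \eqref{Problemv1} --- contradicting the premise you need in (i). So the counterexample, as stated, does not live in the regime the lemma addresses, and both directions of your argument (which reuse the same instance) inherit this defect. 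The fix is easy but changes the construction: take $N=3$ with a third device whose channel is so weak that it can never be satisfied (this secures infeasibility), let the throughput optimum of \eqref{Problemv1-2} concentrate power on the strong device so that $|\mathcal{Q}|=1$, and let the split allocation satisfy devices $1$ and $2$ so that $|\mathcal{Q}|=2<N$; the strict throughput loss of the split then gives the reverse direction.

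It is also worth noting that your route differs from the paper's. The paper does not build an instance at all: it takes the throughput optimizer $(\boldsymbol{\Psi}^*,\mathbf{p}^*)$ of \eqref{Problemv1-2}, invokes infeasibility to conclude that some devices remain unsatisfied, and observes that any reallocation maximizing $|\mathcal{Q}|$ can only yield $R_{\textrm{tot}}(\boldsymbol{\Psi}^{'*},\mathbf{p}^{'*})\leq R_{\textrm{tot}}(\boldsymbol{\Psi}^{*},\mathbf{p}^{*})$, arguing the conflict abstractly for any infeasible instance. Your counterexample strategy, once repaired as above, actually delivers something the paper's argument does not make explicit --- a strict gap in both objectives for a concrete instance --- but it only establishes existence of conflicting instances rather than the paper's instance-independent statement, so you should phrase the conclusion as ``the throughput optimum cannot in general ensure optimality of $|\mathcal{Q}|$, and vice versa.''
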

    \begin{proof}
    We prove Lemma \ref{ConflictMetrics} by contradiction. Let denote $\{\boldsymbol{\Psi}^*, \textbf{\textrm{p}}^*\}$ be the optimal solution obtained by solving problem~\eqref{Problemv1-2}. The satisfied IoT devices set is then determined as in \eqref{QsetDeterminantEq}, and we have
    $$|\mathcal{Q}(\boldsymbol{\Psi}^*,\textbf{\textrm{p}}^*)|< |\mathcal{K}|.$$ 
    The above inequality comes from the fact that the system is infeasible circumstance.
    To provide service for more IoT devices, the system must have a proper policy to move IoT devices from the unsatisfied IoT device set $\mathcal{K} \setminus \mathcal{Q}(\boldsymbol{\Psi}^*,\textbf{\textrm{p}}^*)$ to the satisfied IoT device set $\mathcal{Q}(\boldsymbol{\Psi}^*,\textbf{\textrm{p}}^*)$, for example, providing a new resource allocation solution. By solving the problem in \eqref{ProblemMaxQ}, the proper policy generates a new solution $\{\boldsymbol{\Psi}^{'*}, \textbf{\textrm{p}}^{'*} \}$ to maximize the number of satisfied IoT devices, and we have
    $$R_{\textrm{tot}}(\boldsymbol{\Psi}^{'*},\textbf{\textrm{p}}^{'*})\leq R_{\textrm{tot}}(\boldsymbol{\Psi}^{*},\textbf{\textrm{p}}^{*}).$$ 
    The opposite approach can be implemented in a similar manner. Therefore, the optimal solution, $\{\boldsymbol{\Psi}^{'*},\textbf{\textrm{p}}^{'*}\}$, that maximizes the number of satisfied IoT devices cannot ensure the optimality of the total network throughput. Consequently, maximizing the number of satisfied IoT devices and maximizing the total network throughput are two conflict objectives.
    \end{proof}
    Based on Lemma \ref{ConflictMetrics}, the optimal number of satisfied IoT devices cannot be obtained by solving the problem \eqref{Problemv1-2}. Consequently, in order to serve more IoT devices, the system must sacrifice the total network throughput in exchange. However, if only the problem in \eqref{ProblemMaxQ} is optimized, the radio resource may not be exploited optimally. We, therefore, also consider the following dual-objective optimization problem
    \begin{equation} 
    \label{DualProblem_v1}
        \begin{aligned}
            && \underset{\substack{ \boldsymbol{\Psi}, \textbf{\textrm{p}} }}{\textrm{maximize} } & \quad  \left(|\mathcal{Q}(\boldsymbol{\Psi},\textbf{\textrm{p}})|,R_{\textrm{tot}}(\boldsymbol{\Psi},\textbf{\textrm{p}})\right),\\
            &&\textrm{subject to}\hspace{0.15cm}& \quad \underset{n\in \mathcal{N}}{\sum} \mu_{k,n}P_{n}\leq  P_{k}^{\textrm{max}},  \forall k\in \mathcal{K}, \\
            &&& \quad R_n(\boldsymbol{\Psi},\mathbf{p})\geq R_n^{\textrm{thr}}, \forall n\in \mathcal{Q}(\boldsymbol{\Psi},\textbf{\textrm{p}}),\\
            &&& \quad\sum_{k\in \mathcal{K}}\mu_{k,n}=1, \forall n\in \mathcal{N}.
        \end{aligned}
    \end{equation} 
    In this case, we only need to find a solution that simultaneously optimizes both the number of satisfied IoT devices and the total data throughput, given the higher priority for the number of satisfied IoT devices. The problem \eqref{DualProblem_v1} is non-convex due to the non-convexity of the objective function. Besides, the AP-IoT devices association is a discrete variable, and thus the problem in \eqref{DualProblem_v1} is considered as a mixed-integer non-linear program (MINLP). Unfortunately, this class of problems is nondeterministic polynomial-time (NP) complete. The problem requires time that is superpolynomial in the network size. Besides, in multi-objective optimizations, a feasible solution that simultaneously optimizes all the objectives does not typically exist. Therefore, achieving optimal solutions for two considered variables is inconsequential.
    
    To increase the number of satisfied IoT devices, we can provide more transmit power to un-satisfied IoT devices. However, the power budget of each AP is limited; thus, allocating more power to a specific IoT device will inevitably reduce the power available for the remaining IoT devices. If some IoT devices are served with insufficient data throughput, we can share power from satisfied IoT devices to un-satisfied IoT devices. However, this approach raises the challenge of determining which IoT devices should receive power and how much each IoT device should be given. Furthermore, increasing power for some IoT devices leads to higher interference with other IoT devices and reduces their data throughput. Subsequently, the improvement is hard to guarantee, while a reduction in the number of satisfied IoT devices could happen. On the other hand, iteratively adjusting the power of each IoT device can be impractical due to the significant computational cost involved. Nevertheless, it is worth noting that the satisfied IoT devices set is determined by comparing the achievable rate of each IoT device to the threshold, and thus, the set $\mathcal{Q}(\boldsymbol{\Psi},\textbf{\textrm{p}})$ is directly related to $R_{\textrm{tot}}(\boldsymbol{\Psi},\textbf{\textrm{p}})$. Therefore, intuitively, by increasing the total data throughput of all IoT devices controllably, we can expect that some IoT devices can meet their requested services, thus increasing the number of satisfied IoT devices. In the next section, we will propose a solution to increase the number of satisfied IoT devices by optimizing the total data throughput.
    \begin{remark}
        The APs in the considered IoT network as mentioned in Remark \ref{remark1} are limited in computational ability and other resources. Cooperation among APs to boost network performance is challenging due to the limited resources and backhaul signaling. A server (central processing unit) is usually utilized to gather information about data transmission, APs, and IoT devices' status to manage the network operation. Regarding the total data throughput maximization problem expressed in \eqref{Problemv1}, the infeasible problem occurs when the system cannot find a solution to all IoT devices with their service requirements. Consequently, the whole system is corrupted. In contrast, problem \eqref{ProblemMaxQ} and \eqref{DualProblem_v1} ensure that the system always works and serves at least a subset of IoT devices, and \eqref{DualProblem_v1} still maximize the total data throughput.
        
    \end{remark}
    \begin{remark}
        Solving problems in \eqref{ProblemMaxQ} and \eqref{DualProblem_v1} requires the selection of IoT devices from the set of IoT devices joining the network. Based on the specific requirement of each network, IoT devices can be selected based on the priority of the system or users, \cite{Antonioli2019IBC}. In this work, we assume no priority between IoT devices, and thus, the system will attempt to optimize the number of satisfied devices without considering any specific individual IoT device.

        Additionally, while the assumption of perfect CSI is idealistic, it is crucial for the initial development and validation of our proposed framework. This assumption enables a focused analysis of the optimization challenges without the added complexity of imperfect information, laying the foundation for future studies to address more realistic conditions.
    \end{remark}
     
    \section{Proposed Algorithm}\label{OptimiaztionApproach}
    It is worth noting that the primal problem in \eqref{Problemv1} is infeasible, and thus we propose to solve the number of satisfied IoT devices maximization problem in \eqref{ProblemMaxQ} and the dual-objectives function problem in \eqref{DualProblem_v1}, which are inequivalent to the primal problem in \eqref{Problemv1}. Due to the adversity of the system under infeasible circumstances and the difficulties of the proposed problems \eqref{ProblemMaxQ} and \eqref{DualProblem_v1} as discussed from previous section, finding the optimal solution for the system cannot be ensured. Consequently, by prioritizing serving as many IoT devices as possible, this section proposes solutions to find a good solution to the system under infeasible circumstances.
    
    \subsection{The number of satisfied IoT devices optimization}
    Due to the combinatorial nature of the AP-IoT devices association with a total of $N^K$ candidates, the problem in \eqref{ProblemMaxQ} is extremely difficult to obtain the optimal solution. To overcome this issue, we propose a modified BB algorithm based on the branch-and-bound method. In detail, we construct a BB tree that starts from an arbitrary IoT device, which is considered the root of the BB tree. The system successively testifies $K$ nodes corresponding to $K$ APs. If the power consumption needed to meet an IoT device's required data throughput falls below the power budget of the considered AP, the current node will be added to the first level of the BB tree. 

    The BB tree then goes to the next level and considers a new IoT device. At the $n$-th level, the IoT device of interest will be attached consecutively to every node of the $n-1$-th levels. The IoT device at the $n$-th level is then traversed to all $K$ nodes, and thus we can obtain the temporary AP-IoT devices association of $n$ IoT devices, $\boldsymbol{\Psi}^{\text{temp}}_n$. The minimum power consumption of these $n$ IoT devices is achieved by solving the following problem
    \begin{equation}\label{Infeasible checker 1}
        \begin{aligned}
            && \underset{\substack{ \mathbf{p}_n }}{\textrm{minimize}} & \quad \sum\limits_{i=1}^{n} P_i, \\
            &&\textrm{subject to}\hspace{0.15cm}&\quad R_i(\boldsymbol{\Psi}^{\text{temp}}_n,\mathbf{p}_n)\geq R_i^{\textrm{thr}}, \forall i\in \{1,...,n\},
        \end{aligned}
    \end{equation} 
    The problem in \eqref{Infeasible checker 1} is a linear program, we can quickly obtain the optimal solution, $\mathbf{p}^*_n$. Note that the system then testifies the power budgets of all APs
    \begin{equation}\label{BB Power constraint}
        \sum\limits_{i=1}^n P_i\mu_{k,i} \leq P_k^{\max}, \forall k\in \mathcal{K}
    \end{equation}
    If the power constraints in \eqref{BB Power constraint} are violated, the considered node and its children will be pruned from the BB tree. The process is repeated until reaching the last level of the BB tree or all new nodes are pruned. The detailed procedure is summarized as in Algorithm~\ref{BB Algorithm}
    \begin{algorithm}[t]
		\caption{Modified Branch-and-Bound for the number of satisfied IoT devices optimization (the modified BB algorithm).} \label{BB Algorithm} 
		\begin{algorithmic}[1] 
		      \STATE \textbf{Input}: The system parameters $\mathcal{N}$, $\mathcal{K}$, $B$, $\sigma$, $P^{\max}$, the channel state information $\textbf{h}_n, \forall n\in \mathcal{N}$.
			\STATE Generate an arbitrary order of $N$ IoT devices. Set $n=0$.
			\REPEAT 
			\STATE {$n=n+1$.}
			\REPEAT
                \FOR {k=1:K} 
			\STATE {Associate the $n$-th IoT device to the $k$-th AP.}
                \STATE {Form the temporary AP-IoT devices association of $n$ IoT devices, $\boldsymbol{\Psi}^{\text{temp}}_n$.}
			\STATE {Solve the problem in \eqref{Infeasible checker 1} to achieve     $\mathbf{p}^*_n$.}
                \IF {Constraints in \eqref{BB Power constraint} are not satisfied.}
                \STATE {Pruned node.}
                \ENDIF
                \ENDFOR
			\UNTIL {All nodes of the $(n-1)$-th level are testified.}
			\UNTIL {$n=N$ or all new nodes are pruned.} 
			\STATE \textbf{Output}: $n^*$ number of satisfied IoT devices, the AP-IoT devices association, $\boldsymbol{\Psi}_{n^*}$, and the power allocation $\textrm{\textbf{p}}^*_{n^*}$. 
		\end{algorithmic} 
    \end{algorithm} 

    The performance of the modified BB algorithm \ref{BB Algorithm} will be affected by the order of IoT devices on the BB tree. Therefore, the modified BB algorithm that starts with an arbitrary order of IoT devices cannot ensure the optimal solution. Searching for the optimal order of IoT devices for the BB tree requires the complexity of $\mathcal{O}(N^K)$. As a result, employing an arbitrary order of IoT devices in the BB tree is deemed acceptable for identifying a favorable solution for the system, even under infeasible conditions. However, the BB tree must reach the last level with the existence of at least one node to ensure the non-empty feasible domain for the problem in \eqref{Problemv1}. Subsequently, the modified BB algorithm can be used to check the infeasibility of the original problem in \eqref{Problemv1}.
    \subsection{Proposed solution to the dual-objective optimization problem in \eqref{DualProblem_v1}}\label{ProposedSolution}
    We recall that if the system optimizes the sum of data throughput without meeting all the IoT devices' QoS requirements, the power budget would be allocated to the IoT devices with good channel conditions. In this case, the power allocated to IoT devices with poor channel conditions will be reduced, which will potentially cause low data throughput for such IoT devices. Thus, a solution to improve the fairness among IoT devices while maximizing the overall network throughput is needed. The high-level idea here is that we will allocate more power to IoT devices with poor channel conditions to ensure fairness while ensuring high network throughput for the whole system. In order to share more power to unsatisfied IoT devices, we suggest setting the data throughput requirement of satisfied IoT devices equal to their requested levels. In this way, the system only allocates a certain amount of the power budget to these IoT devices to satisfy the request. The remaining power budget is allocated to the unsatisfied IoT devices in order to improve the total network throughput. When the system reallocates the resources with fixed data throughput constraints, some IoT devices might still receive more data throughput than needed. In this case, the resources from these IoT devices can be shared with other IoT devices again. Therefore, to obtain the final solution, we repeat two steps: $(1)$ updating the satisfied IoT devices by reallocating the power budget; $(2)$ solving the data throughput maximization problem with fixed data throughput constraints.
    
    We first determine an initial satisfied IoT device set, denoted by $\mathcal{Q}(\boldsymbol{\Psi}^{*,(0)},\textbf{\textrm{p}}^{*,(0)})$. Therein, the initial AP association, $\boldsymbol{\Psi}^{*,(0)}$, is determined based on the best large-scale fading selection. Particularly, $\mu^{*,(0)}_{k,n}=1$ if $\zeta_{k,n}\leq \zeta_{j,n}, \forall j\in \mathcal{K}$, and $\mu^{*,(0)}_{j,n}=0, \forall j\neq k, j\in \mathcal{K}$. Here, $\zeta_{k,n}$ is the path loss between the $k$-th AP and the $n$-th IoT device. Besides, the initial power allocation, $\textbf{\textrm{p}}^{*,(0)}$, is obtained by solving the following problem
    \begin{subequations} \label{MaxRwithoutRconstraint_v2}
        \begin{align}
            && \underset{\substack{\textbf{\textrm{p}} }}{\textrm{maximize} } & \quad  R_{\textrm{tot}}(\textbf{\textrm{p}}),\\
            &&\textrm{subject to}\hspace{0.15cm}& \quad \underset{n\in \mathcal{N}}{\sum} \mu^{*,(0)}_{k,n}P_{n}\leq  P_{k}^{\textrm{max}}, \forall k\in \mathcal{K}.
        \end{align}
    \end{subequations} 
    We note that the non-convexity of the objective function is challenging to solve problem~\eqref{MaxRwithoutRconstraint_v2}. Our approach is to transform \eqref{MaxRwithoutRconstraint_v2} into a convex optimization problem by applying the log approximation as introduced below:
    \begin{equation}\label{LogRelax}
		\log_2 (1+z)\geq \alpha(z') \log_2 (z) + \beta(z'),
    \end{equation}
    where $z$ and $z'$ are two non-negative values. In \eqref{LogRelax}, the log transformation factors $\alpha(z')$ and $\beta(z')$ are defined as follows:
    \begin{align}
         &\alpha(z') =\frac{z'}{1+z'},\\
         & \beta(z')=\log_2(1+z')-\frac{z'}
         {1+z'}\log_2(z').
    \end{align}
    We emphasize that the bound in \eqref{LogRelax} is tight, i.e., the inequality holds with the equality as $z=z'$. We now apply the approximation in \eqref{LogRelax} to provide the lower bound of the data throughput $R_{n}$ as follows:
    \begin{equation} \label{LowerBoundRate1}
        R_{n}(\textbf{\textrm{p}})\geq \tilde{R}_{n}(\textbf{\textrm{p}})= B\left(\alpha_{n}(\textbf{\textrm{p}}') \log_2 (\gamma_{n}(\textbf{\textrm{p}}))+\beta_{n}(\textbf{\textrm{p}}')\right),
    \end{equation}
    where the following definitions hold
    \begin{equation}
        \begin{aligned}\label{LogApproximation1}
            &\alpha_{n}(\textbf{\textrm{p}}')=\frac{\gamma_{n}(\textbf{\textrm{p}}')}{1+\gamma_{n}(\textbf{\textrm{p}}')},\\
            &\beta_{n}(\textbf{\textrm{p}}')=\log_2(1+\gamma_{n}(\textbf{\textrm{p}}'))-\frac{\gamma_{n}(\textbf{\textrm{p}}')}{1+\gamma_{n}(\textbf{\textrm{p}}')}\log_2(\gamma_{n}(\textbf{\textrm{p}}')).
        \end{aligned} 
    \end{equation}
    with recalling that $R_{n}(\textbf{\textrm{p}})= \tilde{R}_{n}(\textbf{\textrm{p}})$ as $\textbf{\textrm{p}}=\textbf{\textrm{p}}'$. Let us define 
    \begin{align}
        &\boldsymbol{\alpha}(\textbf{\textrm{p}}')=\{\alpha_{n}(\textbf{\textrm{p}}')\}\in \mathcal{R}^{N\times 1},\\
        & \boldsymbol{\beta}(\textbf{\textrm{p}}')=\{\beta_{n}(\textbf{\textrm{p}}')\}\in \mathcal{R}^{N\times 1},
    \end{align}
    then the problem in~\eqref{MaxRwithoutRconstraint_v2} is approximated as follows:
    \begin{equation} \label{MaxRwithoutRconstraint_v3}
        \begin{aligned}
            && \underset{\substack{ \textbf{\textrm{p}} }}{\textrm{maximize} } & \quad  \tilde{R}_{\textrm{tot}} (\textbf{\textrm{p}})=\sum_{n=1}^N\tilde{R}_{n}(\textbf{\textrm{p}}),\\
            &&\textrm{subject to}\hspace{0.15cm}&\quad 0\leq \underset{n\in \mathcal{N}}{\sum} \mu^{*,(0)}_{k,n}P_{n}\leq  P_{k}^{\textrm{max}}, \forall k\in \mathcal{K}.
        \end{aligned}
    \end{equation} 
    An iterative algorithm is proposed to solve problem~\eqref{MaxRwithoutRconstraint_v2} for a tight approximation. Particularly, there are two main steps comprising: $(1)$ optimizing $\tilde{R}_{\textrm{tot}}(\textbf{\textrm{p}})$ while considering the log transformation factors as constant values; and $(2)$ updating the log transformation factors according to the new power vector. Therefore, in the $i$-th iteration, we solve the following problem
    \begin{equation} \label{MaxRwithoutEqualRconstraint_v4}
        \begin{aligned}
            && \underset{\substack{ {\textbf{\textrm{p}}}}}{\textrm{maximize}} & \quad  \sum_{n=1}^NB\big(\alpha_{n}({\textbf{\textrm{p}}}[i-1])\log_2 (\gamma_{n}({\textbf{\textrm{p}}}))\\
            &&& \hspace{1.5cm}+\beta_{n}({\textbf{\textrm{p}}}[i-1])\big),\\
            &&\textrm{subject to}\hspace{0.15cm}&\quad 0\leq \underset{n\in \mathcal{N}}{\sum} \mu^{*,(0)}_{k,n}{P}_{n}\leq  P_{k}^{\textrm{max}}, \forall k\in \mathcal{K}.
        \end{aligned} 
    \end{equation} 
    We observe that problem~\eqref{MaxRwithoutEqualRconstraint_v4} has a hidden-convex form. For such, we apply the transformation $P_n=\exp(\bar{P}_n)$ to transform \eqref{MaxRwithoutEqualRconstraint_v4} into a convex problem and obtain the global solution by using a general-purpose optimization toolbox. The log approximation method is terminated when the variation between two consecutive iterations is smaller than a sufficient small tolerance $\epsilon_1$. The procedure of solving \eqref{MaxRwithoutRconstraint_v2} is provided in Algorithm \ref{LogApproximation1}, and its convergence is shown in Lemma~\ref{LogLemma_NoRateConstraint}.
    \begin{algorithm}[t]
		\caption{Proposed solution to problem~\eqref{MaxRwithoutRconstraint_v2}} \label{LogApproximation1} 
		\begin{algorithmic}[1] 
		  \STATE \textbf{Input}: System parameters $\mathcal{N}$, $\mathcal{K}$, $B$, $\sigma$, $P^{\max}$; Channel state information $\textbf{h}_n, \forall n\in \mathcal{N}$, the AP association $\boldsymbol{\Psi}^{*,(0)}$, the initial power vector $\textbf{\textrm{p}}[0]$, and the tolerance $\epsilon_1$.
            \STATE Set $i=0$ and calculate $\boldsymbol{\alpha}(\textbf{\textrm{p}}[0])$ and $\boldsymbol{\beta}(\textbf{\textrm{p}}[0])$.
            \REPEAT 
            \STATE {$i\gets i+1$.} 
            \STATE {Solve the problem \eqref{MaxRwithoutEqualRconstraint_v4} and obtain $\textbf{\textrm{p}}[i]$.} 
            \STATE {Update $\boldsymbol{\alpha}(\textbf{\textrm{p}}[i])$ and $\boldsymbol{\beta}(\textbf{\textrm{p}}[i])$.} 
            \UNTIL {$\|\textbf{\textrm{p}}[i]-\textbf{\textrm{p}}[i-1]\|\leq \epsilon_1$.} 
            \STATE \textbf{Output}: $\textrm{\textbf{p}}^{*,(0)}$.
		\end{algorithmic}
    \end{algorithm} 
    \begin{lemma}\label{LogLemma_NoRateConstraint}
    \textcolor{black}{By solving problem~\eqref{MaxRwithoutEqualRconstraint_v4} and updating the log transformation factors iteratively as in Algorithm \ref{LogApproximation1}, the objective function of \eqref{MaxRwithoutRconstraint_v2} will be non-decreased and converged to a fixed point.}
    \end{lemma}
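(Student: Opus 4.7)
The plan is to establish the two claims separately: (i) the sequence $\{R_{\textrm{tot}}(\textbf{p}[i])\}$ generated by Algorithm \ref{LogApproximation1} is monotonically non-decreasing, and (ii) it converges to a finite limit corresponding to a fixed point of the update mapping. The whole argument hinges on the two structural properties of the surrogate $\tilde{R}_{\textrm{tot}}$ already recorded in \eqref{LogRelax}--\eqref{LowerBoundRate1}: for every feasible $\textbf{p}$ and every expansion point $\textbf{p}'$ one has $\tilde{R}_{\textrm{tot}}(\textbf{p};\textbf{p}') \leq R_{\textrm{tot}}(\textbf{p})$, and the inequality tightens to an equality when $\textbf{p}=\textbf{p}'$.

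First, I would prove monotonicity by the standard minorization--maximization sandwich. Writing $\tilde{R}_{\textrm{tot}}(\textbf{p};\textbf{p}')$ for the surrogate constructed at expansion point $\textbf{p}'$, the chain
\begin{equation}
R_{\textrm{tot}}(\textbf{p}[i]) \;\geq\; \tilde{R}_{\textrm{tot}}(\textbf{p}[i];\textbf{p}[i-1]) \;\geq\; \tilde{R}_{\textrm{tot}}(\textbf{p}[i-1];\textbf{p}[i-1]) \;=\; R_{\textrm{tot}}(\textbf{p}[i-1])
\end{equation}
holds: the first inequality is the global lower bound from \eqref{LowerBoundRate1}, the second is by definition of $\textbf{p}[i]$ as a maximizer of the surrogate at step $i$, and the final equality is the tightness of the log bound at the expansion point. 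This proves $R_{\textrm{tot}}(\textbf{p}[i])\geq R_{\textrm{tot}}(\textbf{p}[i-1])$ for every $i\geq 1$.

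Next, I would establish boundedness from above. Because the per-AP power budgets in \eqref{MaxRwithoutRconstraint_v2} restrict $\textbf{p}$ to a compact box $\mathcal{P}\subset\mathbb{R}^{N}_{\geq 0}$, every $\gamma_{n}(\textbf{p})$ is bounded above by $\max_{\textbf{p}\in\mathcal{P}}\gamma_{n}(\textbf{p})<\infty$, and hence $R_{\textrm{tot}}(\textbf{p})$ is bounded above by a constant depending only on the system parameters. Combined with the monotone behavior just shown, the monotone convergence theorem gives a finite limit $R^{\star}=\lim_{i\to\infty} R_{\textrm{tot}}(\textbf{p}[i])$. By compactness of $\mathcal{P}$, the iterates $\{\textbf{p}[i]\}$ admit a convergent subsequence $\textbf{p}[i_j]\to\textbf{p}^{*,(0)}$; continuity of $R_{\textrm{tot}}$ yields $R_{\textrm{tot}}(\textbf{p}^{*,(0)})=R^{\star}$.

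Finally, I would show that any such limit point is a fixed point of the update. At $\textbf{p}^{*,(0)}$ the surrogate coincides with $R_{\textrm{tot}}$ and its gradient equals $\nabla R_{\textrm{tot}}(\textbf{p}^{*,(0)})$, so the KKT conditions of the surrogate problem \eqref{MaxRwithoutEqualRconstraint_v4} evaluated at expansion point $\textbf{p}^{*,(0)}$ reduce to the KKT conditions of the original problem \eqref{MaxRwithoutRconstraint_v2}; hence re-solving \eqref{MaxRwithoutEqualRconstraint_v4} with $\textbf{p}[i-1]=\textbf{p}^{*,(0)}$ returns $\textbf{p}^{*,(0)}$ itself, which is precisely the termination criterion $\|\textbf{p}[i]-\textbf{p}[i-1]\|\leq\epsilon_{1}$ in the limit $\epsilon_{1}\to 0$. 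The main subtlety, which I expect to be the one thing to handle carefully, is justifying that the convergence of the objective forces convergence (or at least subsequential convergence with a common limit) of the iterates; this is where strict concavity of the surrogate after the log-change of variables $P_n=\exp(\bar{P}_n)$ is invoked so that the maximizer of \eqref{MaxRwithoutEqualRconstraint_v4} is unique and the update is continuous in the expansion point, making every accumulation point a fixed point of the algorithm.
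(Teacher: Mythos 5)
Your proof is correct and follows essentially the same route as the paper: the minorization--maximization sandwich $R_{\textrm{tot}}(\textbf{\textrm{p}}[i-1])=\tilde{R}_{\textrm{tot}}(\textbf{\textrm{p}}[i-1];\textbf{\textrm{p}}[i-1])\leq\tilde{R}_{\textrm{tot}}(\textbf{\textrm{p}}[i];\textbf{\textrm{p}}[i-1])\leq R_{\textrm{tot}}(\textbf{\textrm{p}}[i])$ plus boundedness of the objective under the finite power budget, which is exactly the paper's argument. Your additional material on subsequential convergence of the iterates and the KKT-based fixed-point characterization goes beyond what the paper proves (the paper only establishes monotone convergence of the objective value), and is a reasonable strengthening provided the strict-concavity claim for the surrogate after the change of variables $P_n=\exp(\bar{P}_n)$ is verified.
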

    \begin{proof}
    \textcolor{black}{The proof is available in Appendix~\ref{ProofLemmaLogApproximation}.}
    \end{proof}
    \textcolor{black}{Lemma~\ref{LogLemma_NoRateConstraint} confirms the convergence of     Algorithm~\ref{LogLemma_NoRateConstraint} thanks to the exploitation of the log approximation and convex feasible set. For given $\boldsymbol{\Psi}^{*,(0)}$ and $\textbf{\textrm{p}}^{*,(0)}$, the satisfied IoT devices set is defined as follows:
   \begin{equation}
   \mathcal{Q}^{*}(\boldsymbol{\Psi}^{*,(0)}, \textbf{\textrm{p}}^{*,(0)})=\{n|R_n(\boldsymbol{\Psi}^{*,(0)}, \textbf{\textrm{p}}^{*,(0)})\geq R_n^{\textrm{thr}}, \forall n\in \mathcal{N}\},
   \end{equation}
    which indicates the initial set of satisfied IoT devices. We stress that $\mathcal{Q}^{*}(\boldsymbol{\Psi}^{*,(0)}, \textbf{\textrm{p}}^{*,(0)}) \subseteq \mathcal{K}$ and its features are observed as follows.}
   \begin{remark}
    After solving problem~\eqref{MaxRwithoutRconstraint_v2} by exploiting Algorithm~\ref{LogApproximation1} to obtain the power coefficients $\textbf{\textrm{p}}^{*,(0)}$ and together with $\boldsymbol{\Psi}^{*,(0)}$, there are two special observations of the satisfied IoT device set. If $\mathcal{Q}^{*}(\boldsymbol{\Psi}^{*,(0)}, \textbf{\textrm{p}}^{*,(0)})= \emptyset$, it implies that no IoT device is satisfied its data throughput requirements at the initial solution. To improve the service management, we allocate the maximum transmit power to an IoT device with the smallest path loss (strongest channel gain), $P_n=P_k^{\max}$ if $\zeta_{n,k}\leq \zeta_{i,j}, \forall i\in \mathcal{N}, j\in \mathcal{K}$. In contrast, we allocate the zero transmit power to the remaining IoT devices, $P_{n'}=0, \forall n'\neq n$. By this policy, the prioritized IoT device will attain the highest received signal strength and no mutual interference. Once the IoT device meets the minimum data throughput, we add the IoT device to the initial satisfied IoT device set. Otherwise, the selected IoT device cannot meet the data throughput, and the system cannot serve IoT devices with their service requirements. If $|\mathcal{Q}^{*}(\boldsymbol{\Psi}^{*,(0)}, \textbf{\textrm{p}}^{*,(0)})|=N$, it means that all IoT devices are served with requested services.
    \end{remark}
    We then solve the total data throughput maximization problem with the fixed quality of service constraints from satisfied IoT devices with respect to the limited power constraint in \eqref{V1b} and the AP-IoT devices association in \eqref{V1d} as below
	\begin{subequations} \label{MaxRwithoutRconstraint_v1}
		\begin{align}
			&& \underset{\substack{ \boldsymbol{\Psi}, \textbf{\textrm{p}} }}{\textrm{maximize}} & \quad  R_{\textrm{tot}}(\boldsymbol{\Psi}, \textbf{\textrm{p}}),\\
			&&\textrm{subject to}\hspace{0.15cm}& \quad\eqref{V1b},\eqref{V1d}, \\
			&&&\quad R_n(\boldsymbol{\Psi},\textbf{\textrm{p}})= \xi_n^{\textrm{thr}}, \forall n\in \mathcal{Q}^*(\boldsymbol{\Psi}^{*,(t-1)}, \textbf{\textrm{p}}^{*,(t-1)}) \label{MaxRwithoutRconstraint_v1:EqualRConstraint} 
		\end{align}
    \end{subequations}
    where $t$ is the iteration index. Due to highly coupling of mutual interference, we introduce the constraint \eqref{MaxRwithoutRconstraint_v1:EqualRConstraint}, $\xi_n^{\textrm{thr}}=R_n^{\textrm{thr}}+R_n^{\textrm{thr}}\tau$, to avoid the undesired situations that the data throughput values of IoT devices in $\mathcal{Q}^{*}$ are slightly smaller than the threshold, in which $\tau$ is sufficiently small tolerable constant. The constraint \eqref{MaxRwithoutRconstraint_v1:EqualRConstraint} is to ensure that the system can serve the satisfied IoT devices in $\mathcal{Q}^{*}$ with only their service requirements. With a finite power level $\textrm{P}^{\max}$, the remaining power of each AP should be allocated to other IoT devices, and it is expected that their data throughput will increase and meet the requirements. The set of satisfied IoT devices will be updated after solving the problem in \eqref{MaxRwithoutRconstraint_v1}, 
    \begin{equation}
        \mathcal{Q}^*(\boldsymbol{\Psi}^{*,(t)}, \textbf{\textrm{p}}^{*,(t)})=\{n|R_n(\boldsymbol{\Psi}^{*,(t)}, \textbf{\textrm{p}}^{*,(t)})\geq R_n^{\textrm{thr}}, \forall n\in \mathcal{N}\}.
    \end{equation}
    As mentioned before, the problem in \eqref{MaxRwithoutRconstraint_v1} is difficult to simultaneously optimally obtain both $\boldsymbol{\Psi}$ and $\textbf{\textrm{p}}$. Thus, we suggest sequentially obtaining each variable while considering the others fixed. In the following subsections, we will provide the solutions to the power allocation and the AP association problems. The solutions to sub-problems guarantee non-decreasing the number of satisfied IoT devices and the total data throughput.
    \begin{remark}
        The modified BB algorithm can be used to initialize the set of satisfied IoT devices. However, this initialization will force the system to provide services to the set of IoT devices that have already been allocated resources to achieve the required data throughput only. Therefore, fixing the data throughput of these IoT devices and optimizing the resource allocation again will be trivial.
    \end{remark}
    \subsection{Total Data Throughput Maximization Problem With The Quality of Service Constraints}\label{MaxQset}
    In this subsection, we find the solution to problem~\eqref{MaxRwithoutRconstraint_v1} in an iterative manner. One observes that the AP association for the IoT devices impacts mutual interference and the power allocation under the limited power budget. Each IoT device can select one of the $K$ APs, a total of $N^K$ possibilities should be exhaustively searched to obtain the global optimum. Consequently, obtaining both the AP association and the power allocation is nontrivial for a large-scale network. Therefore, we propose transforming the problem into two sub-problems corresponding to solving the power allocation and the AP-IoT device association. Optimizing the two sub-problems is not equivalent to optimizing the original problem. To achieve a good solution, an alternative algorithm is needed to acquire the solution to each optimization variable iteratively. For convenience, we remove the iteration index at all notations in this sub-section.
    \subsubsection{Power allocation with equal data throughput constraints}\label{Subsection:PowerAllocation}
    For a given AP association, the power allocation optimization is rewritten below
    \begin{subequations} \label{MaxRwithoutRconstraint_P_v1}
        \begin{align}
            && \underset{\substack{\textbf{\textrm{p}} }}{\textrm{maximize} } & \quad  R_{\textrm{tot}}(\textbf{\textrm{p}}),\\
            &&\textrm{subject to}\hspace{0.15cm}& \quad \underset{n\in \mathcal{N}}{\sum} \mu_{k,n}P_{n}\leq  P_{k}^{\textrm{max}},  \forall k\in \mathcal{K},\label{MaxRwithoutRconstraint_P_v1: V1b}, \\
            &&&\quad R_n(\boldsymbol{\Psi},\textbf{\textrm{p}})= \xi_n^{\textrm{thr}}, \forall n\in \mathcal{Q}^* \label{MaxRwithoutRconstraint_v1:EqualRConstraint: V1c}.
        \end{align}
    \end{subequations}
    We use the log approximation method to get the lower bound on the data throughput formula and transform problem~\eqref{MaxRwithoutRconstraint_P_v1} into a tractable form, which is given by
    \begin{subequations} \label{MaxRwithoutRconstraint_v3}
        \begin{align}
            && \underset{\substack{ \textbf{\textrm{p}} }}{\textrm{maximize} } & \quad  \tilde{R}_{\textrm{tot}}(\textbf{\textrm{p}})=\sum_{n=1}^N\tilde{R}_{n}(\textbf{\textrm{p}}),\\
            &&\textrm{subject to}\hspace{0.15cm}&\quad 0\leq \underset{n\in \mathcal{N}}{\sum} \mu_{k,n}P_{n}\leq  P_{k}^{\textrm{max}}, \forall k\in \mathcal{K},\\
            &&&\quad \alpha_{n}(\bar{\textbf{\textrm{p}}})\log_2 (\gamma_{n}(\bar{\textbf{\textrm{p}}}))+\beta_{n}(\bar{\textbf{\textrm{p}}})= \xi_n^{\textrm{thr}}, \forall n\in \mathcal{Q}^{*}.
        \end{align}
    \end{subequations}  
    The iterative algorithm as in Algorithm \ref{LogApproximation1} is also applied to optimize the objective of the problem in \eqref{MaxRwithoutRconstraint_P_v1}. 
    At every iteration, we need to solve the following problem
        \begin{subequations} \label{MaxRwithEqualRconstraint_v5}
            \begin{align}
                &\underset{\substack{ \bar{\textbf{\textrm{p}}}}}{\textrm{maximize}}  \quad  \sum_{n\in \mathcal{N}/\mathcal{Q}^{*}}^NB\Big(\alpha_{n}(\bar{\textbf{\textrm{p}}}[i-1])\log_2 (\gamma_{n}(\bar{\textbf{\textrm{p}}}))\nonumber \\
                &\hspace{3.4cm}+\beta_{n}(\bar{\textbf{\textrm{p}}}[i-1])\Big),\\
                &\textrm{subject to} \nonumber\\
                &\quad 0\leq \underset{n\in \mathcal{N}}{\sum} \mu_{k,n}\exp(\bar{P}_{n})\leq  P_{k}^{\textrm{max}}, \forall k\in \mathcal{K},\\
                &\quad \alpha_{n}(\bar{\textbf{\textrm{p}}}[i-1])\log_2 (\gamma_{n}(\bar{\textbf{\textrm{p}}}))+\beta_{n}(\bar{\textbf{\textrm{p}}}[i-1])= \xi_n^{\textrm{thr}},\forall n\in \mathcal{Q}^{\ast}, \label{MaxRwithEqualRconstraint_v5:EqualRConstraint}
            \end{align}
        \end{subequations}    
    where $P_n=\exp(\bar{P}_n)$, and the data throughput of each satisfied IoT device in $\mathcal{Q}^{\ast}$ is removed from the objective function of problem~\eqref{MaxRwithEqualRconstraint_v5} since the network treats their services as constants. However, due to the constraints \eqref{MaxRwithEqualRconstraint_v5:EqualRConstraint}, problem~\eqref{MaxRwithEqualRconstraint_v5} is non-convex. At this step, we suggest applying the partial Lagrangian function defined as follows:
    \begin{equation}\label{(partial)LagrangianFunction}
        \begin{aligned}
            L(\bar{\textbf{\textrm{p}}},\boldsymbol{\theta})=&\sum_{n\in \mathcal{N} 
     }B\left(\alpha_{n}(\bar{\textbf{\textrm{p}}}[i-1])\log (\gamma_{n}(\boldsymbol{\Psi},\bar{\textbf{\textrm{p}}}))+\beta_{n}(\bar{\textbf{\textrm{p}}}[i-1])\right)\\
            &-\sum_{k=1}^{K}\theta_{k}(\sum_{n=1}^{N}\mu_{k,n}\exp(\bar{\text{P}}_n)-P_k^{\textrm{max}}),
        \end{aligned}
    \end{equation}
    where $\theta_k$ is the dual variable corresponding to the
    power constraint of the $k$-th AP, and $\boldsymbol{\theta}=[\theta_1,...,\theta_N]$. The equal data throughput constraints, \eqref{MaxRwithEqualRconstraint_v5:EqualRConstraint}, are not introduced to the multipliers and are used as back-tracking conditions to update the power. Based on the Lagrangian function in \eqref{(partial)LagrangianFunction}, the dual optimization problem is formulated as:
    \begin{equation}\label{DualProblem}  
		\begin{aligned}
			&& \underset{\substack{\boldsymbol{\theta} }}{\mathrm{minimize}} & \quad g(\boldsymbol{\theta}) = \underset{\bar{\textbf{\textrm{p}}}}{\textrm{maximize}} \hspace{0.1cm}L(\bar{\textbf{\textrm{p}}},\boldsymbol{\theta}),\\
			&&\mbox{subject to}&\quad \theta_n \geq 0, \forall n\in \mathcal{N}.\\
            &&& \quad \eqref{MaxRwithEqualRconstraint_v5:EqualRConstraint}.
		\end{aligned} 
    \end{equation} 
    We emphasize that the Lagrangian function in \eqref{(partial)LagrangianFunction} sets an upper bound to the objective function of problem~\eqref{MaxRwithoutRconstraint_v3}. We, therefore, optimize $L(\bar{\textbf{\textrm{p}}},\boldsymbol{\lambda})$ with respect to $\bar{P}_n, n\in \mathcal{N}/\mathcal{Q}^{*}$. For such, the corresponding derivative is computed as
    \begin{equation}\label{Lagrangian derivative} 
    	\begin{aligned}  
    		&\frac{\partial L(\bar{\textbf{\textrm{p}}},\boldsymbol{\theta})}{\partial \bar{P}_n}= \frac{1}{\ln(2)}B\alpha_{n}(\bar{\textbf{\textrm{p}}}[i-1])-\sum_{n'=1,\atop n' \neq n}^{N} B\alpha_{n'}(\bar{\textbf{\textrm{p}}}[i-1])\\
    		&\times \frac{\sum_{k=1}^K |\mu_{k,n} g_{k,n'}|^2\exp (\bar{P}_n)}{\ln(2)\left(\sum^N_{j=1, j\neq n'}\sum_{k=1}^K|\mu_{k,j}g_{k,n'}|^2\exp(\bar{P}_j)+\sigma^2\right)}\\ 
    		&-\sum_{k=1}^{K}\theta_{k}\mu_{k,n}\exp(\bar{\text{P}}_n)=0, \forall n\in \mathcal{N}/\mathcal{Q}^{*}.
    	\end{aligned}
    \end{equation}
    From \eqref{Lagrangian derivative}, after a back transformation $P_n=\exp(\bar{P}_n)$, a set of fixed-point solutions to the power allocation of unsatisfied IoT devices is formed as in \eqref{FixedPointEq1:UnsatisfiedIoT devices}. Meanwhile, the back-tracking condition  \eqref{MaxRwithEqualRconstraint_v5:EqualRConstraint} is rewritten as
    \begin{equation}
        \begin{aligned}
            \frac{P_n\sum\limits_{k=1}^K|\mu_{k,n}g_{k,n}|^2}{\underset{n'\neq n,n'\in \mathcal{N}}{\sum}\sum\limits_{k=1}^K|\mu_{k,n'}g_{k,n}|^2P_{n'}+\sigma^2}=&2^{\frac{\frac{\xi_n^{\textrm{thr}}}{B}-\beta_{n}({\textbf{\textrm{p}}}[i-1])}{\alpha_{n}({\textbf{\textrm{p}}}[i-1])}},\forall n\in \mathcal{Q}^{*}.
        \end{aligned}
    \end{equation}
   The power allocation of satisfied IoT devices in $\mathcal{Q}^{*}$ is then determined by a set of fixed-point solutions as in \eqref{FixedPointEq1:SatisfiedIoT devices}. 
	\begin{figure*}[t]
        \begin{subequations}\label{FixedPointEq2}
            \begin{align}
                P_n=&\frac{B\alpha_{n}({\textbf{\textrm{p}}}[i-1])}{\underset{n'\neq n}{\sum} B\alpha_{n'}({\textbf{\textrm{p}}}[i-1])\frac{\sum\limits_{k=1}^K \abs{\mu_{k,n} g_{k,n'}}^2}{\underset{j\neq n'}{\sum}\sum\limits_{k=1}^K\abs{\mu_{k,j}g_{k,n'}}^2P_j+\sigma^2}+\sum\limits_{k=1}^{K}\ln(2)\theta_{k}\mu_{k,n}}=I_n(\textbf{\textrm{p}}), \forall n\in \mathcal{N}/\mathcal{Q}^{*}. \label{FixedPointEq1:UnsatisfiedIoT devices}\\
            	P_n=&\frac{2^{\frac{\frac{\xi_n^{\textrm{thr}}}{B}-\beta_{n}({\textbf{\textrm{p}}}[i-1])}{\alpha_{n}({\textbf{\textrm{p}}}[i-1])}}\left(\underset{n'\neq n}{\sum}\sum\limits_{k=1}^K\abs{\mu_{k,n'}g_{k,n}}^2P_{n'}+\sigma^2\right)}{\sum\limits_{k=1}^K\abs{\mu_{k,n}g_{k,n}}^2}=I_n(\textbf{\textrm{p}}), \forall n\in \mathcal{Q}^{*} .\label{FixedPointEq1:SatisfiedIoT devices}
            \end{align}
        \end{subequations}
		\hrule
		\vspace*{0.1cm}
	\end{figure*}
     According to the IoT device status, i.e., either satisfied or unsatisfied, \eqref{FixedPointEq1:UnsatisfiedIoT devices} and \eqref{FixedPointEq1:SatisfiedIoT devices} can be written as 
    \begin{equation}\label{PointWiseVector}
        \textbf{\textrm{p}}=\textbf{I}(\textbf{\textrm{p}}),
    \end{equation}
    where the following definition holds
    \begin{equation}
    \textbf{I}(\textbf{\textrm{p}})=[I_1(\textbf{\textrm{p}}),...,I_N(\textbf{\textrm{p}})]^T,
    \end{equation}
    and $I_n(\textbf{\textrm{p}})$ is defined as in \eqref{FixedPointEq1:UnsatisfiedIoT devices} and \eqref{FixedPointEq1:SatisfiedIoT devices} according to the served data throughput. From initial transmit powers in the feasible domain, $\mathbf{p}^{(0)}$, we then have the following update  
	\begin{equation}\label{FixedPointWithEqRateConstraint}
	    \textbf{\textrm{p}}(s+1)=\textbf{I}(\textbf{\textrm{p}}(s)),
	\end{equation}
	where $s$ is the iteration index. The Lagrangian multipliers are updated to penalize the violation of limited power constraints as follows:
     \begin{equation} \label{eq:Lagrange}
     \theta_{k}{(s+1)}=\left[\theta_{k}{(s)}+\epsilon_\theta\left(\sum_{n=1}^N\mu_{k,n}\exp(\bar{P}_n)-P_k^{\max}\right)\right]. 
    \end{equation}
    where $\epsilon_\theta$ is a sufficiently small step size. By utilizing the updates in \eqref{FixedPointWithEqRateConstraint} and \eqref{eq:Lagrange}, the convergence of the fixed-point equation is proved as in Lemma \ref{FixedPointConvergence_noRateConstraint} by following the standard interference function in \cite{Yates1995}.
    \begin{lemma}\label{FixedPointConvergence_noRateConstraint} $\textbf{I}(\textbf{\textrm{p}})$ given in \eqref{PointWiseVector} with its elements defined in \eqref{FixedPointEq1:UnsatisfiedIoT devices} and \eqref{FixedPointEq1:SatisfiedIoT devices} is a standard interference function (SIF). As a consequence, starting from a feasible point and iteratively updating the transmit power coefficients based on \eqref{FixedPointWithEqRateConstraint}  converges to a fixed point that optimizes the objective of problem~\eqref{MaxRwithEqualRconstraint_v5}.
	\end{lemma}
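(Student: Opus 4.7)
My plan is to reduce the statement to Yates' standard interference function (SIF) framework from \cite{Yates1995}, which says that if a vector map $\mathbf{I}:\mathbb{R}_{+}^{N}\to\mathbb{R}_{++}^{N}$ satisfies positivity, monotonicity, and scalability, then the iteration $\mathbf{p}(s{+}1)=\mathbf{I}(\mathbf{p}(s))$ converges to a unique fixed point (whenever one exists) from any feasible start. Accordingly, I would verify the three axioms for the piecewise map whose components are given by \eqref{FixedPointEq1:UnsatisfiedIoT devices} on $\mathcal{N}/\mathcal{Q}^{\ast}$ and \eqref{FixedPointEq1:SatisfiedIoT devices} on $\mathcal{Q}^{\ast}$, and then identify the fixed point with a KKT point of \eqref{MaxRwithEqualRconstraint_v5}.

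Positivity is immediate: the numerators $B\alpha_{n}(\mathbf{p}[i{-}1])$ and $2^{(\xi_{n}^{\textrm{thr}}/B-\beta_{n})/\alpha_{n}}$ are strictly positive, and every denominator is bounded below by $\sigma^{2}>0$ or by a positive channel-gain sum, so $I_{n}(\mathbf{p})>0$. Monotonicity: if $\mathbf{p}\geq\mathbf{p}'$ componentwise, then for \eqref{FixedPointEq1:UnsatisfiedIoT devices} the interference-plus-noise term $\sum_{j\neq n'}\sum_{k}|\mu_{k,j}g_{k,n'}|^{2}P_{j}+\sigma^{2}$ grows, which shrinks the inner fraction and hence the outer denominator, so $I_{n}$ is non-decreasing; for \eqref{FixedPointEq1:SatisfiedIoT devices} the numerator is explicitly non-decreasing while the denominator does not depend on $\mathbf{p}$. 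Scalability hinges on the noise floor breaking homogeneity: for $\alpha>1$, the unsatisfied-device branch requires comparing $\alpha(\sum_{j}\ldots P_{j}+\sigma^{2})$ with $\alpha\sum_{j}\ldots P_{j}+\sigma^{2}$, where the former is strictly larger; the Lagrange term $\sum_{k}\ln(2)\theta_{k}\mu_{k,n}/\alpha$ in $\alpha I_{n}(\mathbf{p})$ is also strictly smaller than its counterpart in $I_{n}(\alpha\mathbf{p})$; for the satisfied-device branch the ratio $\alpha I_{n}(\mathbf{p})/I_{n}(\alpha\mathbf{p})$ simplifies to $(\alpha\sum_{n'\neq n}\ldots P_{n'}+\alpha\sigma^{2})/(\alpha\sum_{n'\neq n}\ldots P_{n'}+\sigma^{2})>1$.

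With the three axioms in hand, Yates' theorem delivers convergence of \eqref{FixedPointWithEqRateConstraint} to the unique fixed point $\mathbf{p}^{\star}$ of $\mathbf{p}=\mathbf{I}(\mathbf{p})$. To close the loop with optimality, I would observe that on $\mathcal{N}/\mathcal{Q}^{\ast}$ the fixed-point relation is precisely the stationarity condition $\partial L/\partial\bar{P}_{n}=0$ derived in \eqref{Lagrangian derivative}, while on $\mathcal{Q}^{\ast}$ it is exactly the equal-rate condition \eqref{MaxRwithEqualRconstraint_v5:EqualRConstraint} rearranged for $P_{n}$; combined with the dual update \eqref{eq:Lagrange} enforcing the power-budget constraints, these form the KKT system of \eqref{MaxRwithEqualRconstraint_v5} in the $\bar{P}_{n}=\ln P_{n}$ variables, in which the objective is concave and the budget constraints are convex. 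The main obstacle I foresee is the scalability verification: because the map has two qualitatively different branches, one must check that the strict inequality $\alpha\mathbf{I}(\mathbf{p})>\mathbf{I}(\alpha\mathbf{p})$ holds componentwise in both cases simultaneously, and it is the additive $\sigma^{2}$ (and, for unsatisfied devices, the $\theta_{k}$ term) that provides the required strictness; a secondary but lighter point is guaranteeing that a fixed point exists within the feasible set carved out by \eqref{V1b}, which is secured by the projection-style Lagrangian update in \eqref{eq:Lagrange}.
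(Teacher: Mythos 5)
Your proposal follows essentially the same route as the paper's proof: verify Yates' positivity, monotonicity, and scalability axioms for the two-branch map in \eqref{FixedPointEq1:UnsatisfiedIoT devices} and \eqref{FixedPointEq1:SatisfiedIoT devices}, invoke the SIF framework to conclude convergence of the update \eqref{FixedPointWithEqRateConstraint}, and then identify the limit with the Lagrangian stationarity/equal-rate (KKT) conditions of \eqref{MaxRwithEqualRconstraint_v5} via the one-to-one transform $P_n=\exp(\bar{P}_n)$. Your axiom checks are if anything more explicit than the paper's, which establishes monotonicity through the derivative sign in \eqref{DerivativeIF} and merely asserts scalability, but the underlying argument is the same.
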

	\begin{proof}
	    The proof is provided in Appendix \ref{Proof:FixedPointConvergence_noRateConstraint}
	\end{proof}
    \begin{algorithm}[t]
		\caption{Dual interference function-based method for power allocation with equality of service constraints (DIF-PA).} \label{LogApproximationAlgorithm_EqRateConstraint} 
		\begin{algorithmic}[1] 
		      \STATE \textbf{Input}: The system parameters $\mathcal{N}$, $\mathcal{K}$, $B$, $\sigma$, $P^{\max}$, the channel state information $\textbf{h}_n, \forall n\in \mathcal{N}$, the AP association $\boldsymbol{\Psi}^{*,(t-1)}$, the power vector $\textbf{\textrm{p}}[0]=\textbf{\textrm{p}}^{*,(t-1)}$, and the tolerance $\epsilon_1$.
			\STATE Set the iteration index $i=0$. Calculate $\boldsymbol{\alpha}(\textbf{\textrm{p}}[0])$ and $\boldsymbol{\beta}(\textbf{\textrm{p}}[0])$. 
			\REPEAT 
			\STATE {$i\gets i+1$. Set $s=0$.} 
			\REPEAT
            \STATE {Calculate $\textbf{\textrm{p}}(s+1)=\textbf{I}(\textbf{\textrm{p}}(s))$ as in \eqref{FixedPointWithEqRateConstraint}.}
			\STATE Update Lagrangian multipliers $\theta_{k}{(s+1)}=\left \lceil \theta_{k}{(s)}+\epsilon_\theta\left(\sum_{n=1}^N\mu_{k,n}\exp(\bar{P}_n)-P_k^{\max}\right) \right \rceil^{+}$.
			\STATE {$s\gets s+1$.}
			\UNTIL {Convergence.}
			\STATE {Update $\boldsymbol{\alpha}(\textbf{\textrm{p}}^*[i])$ and $\boldsymbol{\beta}(\textbf{\textrm{p}}^*[i])$ using \eqref{LogApproximation1}.}
			\UNTIL {$\|\textbf{\textrm{p}}^*[i]-\textbf{\textrm{p}}^*[i-1]\|\leq \epsilon_1$.} 
			\STATE \textbf{Output}: The power allocation vector $\textrm{\textbf{p}}^{*,(t)}$. 
		\end{algorithmic}
    \end{algorithm} 
    After obtaining the optimized transmit powers for all the IoT devices, they are exploited to update the log approximation factors as in \eqref{LogApproximation1}. The details of the proposed optimization strategy are summarized in Algorithm \ref{LogApproximationAlgorithm_EqRateConstraint}. We note that  Algorithm \ref{LogApproximationAlgorithm_EqRateConstraint} will monotonically increase the objective function of \eqref{MaxRwithoutRconstraint_v3}, $R_{\textrm{tot}}(\textbf{\textrm{p}})$, which always converges to a fixed-point solution by utilizing the same methodology as shown in Lemma~\ref{LogApproximation1}.
    \subsubsection{AP association}\label{Subsection:AP_Selection}
    Given the power allocation, we reformulate the problem with respect to $\boldsymbol{\Psi}$ as below
	\begin{subequations} \label{MaxRwithoutRconstraint_APselection}
		\begin{align}
			&& \underset{\substack{ \boldsymbol{\Psi}}}{\textrm{maximize} } & \quad  R_{\textrm{tot}}(\boldsymbol{\Psi}),\\
			&&\textrm{subject to}\hspace{0.15cm}& \quad\underset{n\in \mathcal{N}}{\sum} \mu_{k,n}P_{n}\leq  P_{k}^{\textrm{max}},  \forall k\in \mathcal{K},\label{MaxRwithoutRconstraint_APselection: V1b} \\
            &&& \quad\sum_{k\in \mathcal{K}}\mu_{k,n}=1, \forall n\in \mathcal{N}, \label{MaxRwithoutRconstraint_APselection: V1c} \\
			&&&\quad R_n(\boldsymbol{\Psi},\textbf{\textrm{p}})= \xi_n^{\textrm{thr}}, \forall n\in \mathcal{Q}^* \label{MaxRwithoutRconstraint_APselection: V1d} 
		\end{align}
	\end{subequations} 
	The constraint in \eqref{MaxRwithoutRconstraint_v1:EqualRConstraint} is extremely difficult to satisfy, and thus the feasible set of the problem in \eqref{MaxRwithoutRconstraint_APselection} is almost empty. Therefore, we change the target of the AP association problem to optimize the total data throughput while guaranteeing the minimum data throughput for all IoT devices in the given satisfied IoT devices set. We rewrite the above optimization problem while keeping the total power constraint, \eqref{MaxRwithoutRconstraint_APselection: V1b}, and the AP-IoT devices association constraint, \eqref{MaxRwithoutRconstraint_APselection: V1c}, as follows:
    \begin{equation} \label{MaxRwithoutRconstraint_APselection2}
		\begin{aligned}
			&& \underset{\substack{ \boldsymbol{\Psi}}}{\textrm{maximize}} & \quad  R_{\textrm{tot}}(\boldsymbol{\Psi})\\
			&&\textrm{subject to}\hspace{0.15cm}& \quad\eqref{MaxRwithoutRconstraint_APselection: V1b},\eqref{MaxRwithoutRconstraint_APselection: V1c}, \\
			&&&\quad R_n(\boldsymbol{\Psi},\textbf{\textrm{p}})\geq R_n^{\textrm{thr}}, \forall n\in \mathcal{Q}^{*}. 
		\end{aligned}
    \end{equation} 
    Our approach is to apply the coalition game method to search an AP association structure, which aims to provide a better total data throughput to all IoT devices. Therein, IoT devices in the system play the role of players, while the change of AP association from each IoT device is considered as a strategy. We define the set of IoT devices being served by the $k$-th AP as $\mathcal{N}_k= \{n|\mu_{k,n}=1, n\in \mathcal{N}\}$. The current AP association structure of all IoT devices is then defined as $\mathcal{F}^{\textrm{curr}}=\{\mathcal{N}_1,..., \mathcal{N}_K\}$. We now have two conditions $\mathcal{N}_k\cup \mathcal{N}_j =\emptyset, \forall k,j \in \mathcal{K}$ and $\underset{k\in\mathcal{K}}{\cup} \mathcal{N}_k = \mathcal{N}$. If the $n$-th IoT device change the AP association from $k$-th to $j$-th, a new AP association structure is formed $\mathcal{F}^{\textrm{temp}}=(\mathcal{F}^{\textrm{curr}}\backslash {\mathcal{N}_k,\mathcal{N}_j})\cup (\mathcal{N}_k\backslash n)\cup (\mathcal{N}_j\cup n)$. The new AP association structure is accepted if it satisfies these two following updated policies
    \begin{align}
        &R_n(\mathcal{F}^{\textrm{temp}})\geq \xi_n^{\textrm{thr}}, \forall n\in \mathcal{Q}^*, \label{CGcondition1}\\
        &R_{\textrm{tot}}(\mathcal{F}^{\textrm{temp}})> R_{\textrm{tot}}(\mathcal{F}^{\textrm{curr}}),\label{CGcondition2}
    \end{align}
    where \eqref{CGcondition1} ensures that IoT devices belonging to the satisfied IoT devices set $\mathcal{Q}^*$ still meet the required data throughput. Meanwhile, \eqref{CGcondition2} ensures that we only can change the AP association of unsatisfied IoT devices when the total data throughput is improved. The expectation of the final AP association structure is that all IoT devices agree that this structure is the best from every IoT devices' view. Intuitively, no IoT device can find a better strategy to improve the player's utility. Given the strategies, updated policies, and the set-up target, we repeatedly traverse through all IoT devices in the system and change their AP association until obtaining the final AP association structure. We introduce the variable $\iota$ to stop the loop. In particular, we set $\iota=0$ at the beginning of every loop, and the algorithm runs through all IoT devices to testify whether they can find a new AP that achieves a higher total network throughput. If $\iota=0$ after the algorithm traverses to all IoT devices, no IoT device wants to change its AP association, and the algorithm stops. Whereas $\iota=1$ means that at least one IoT device changes the AP association and forms a new AP association structure in the system, therefore, the algorithm sets $\iota=0$ and testifies all IoT devices again. The detailed method is provided in Algorithm \ref{AlgorithmCG}. Note that the number of APs and IoT devices is limited, thus, the number of AP association structure candidates is finite. Therefore, Algorithm \ref{AlgorithmCG} always stops at a certain iteration.
    \begin{algorithm}[t]
		\caption{Coalition game-based AP association (CG-APA)} \label{AlgorithmCG}
		\begin{algorithmic}[1]
			\STATE \textbf{Input}:  The system parameters $\mathcal{N}$, $\mathcal{K}$, $B$, $\sigma$, $P^{\max}$, the channel state information $\textbf{h}_n, \forall n\in \mathcal{N}$, the power vector $\textbf{\textrm{p}}^{*,(t-1)}$.
            \STATE \textbf{Initialized:} $\mathcal{F}^{\textrm{curr}}$ based on the AP association $\boldsymbol{\Psi}^{*,(t-1)}$.
			\REPEAT 
			\STATE {Set $\iota=0$.}
			\FOR {$n=1: N$}
			\FOR {$k=1: K$}
			\STATE {Assume that $n\in \mathcal{N}_j$ then $\mathcal{F}^{\textrm{temp}}=(\mathcal{F}^{\textrm{curr}}\backslash {\mathcal{N}_k,\mathcal{N}_j})\cup (\mathcal{N}_k\backslash n)\cup (\mathcal{N}_j\cup n)$.}
			\STATE {Compute $R_n(\mathcal{F}^{\textrm{curr}})$ and $R_n(\mathcal{F}^{\textrm{temp}}), \forall n\in \mathcal{N}$.}
			\IF {\eqref{CGcondition1} is violated}
			\STATE {\textbf{Continue}.}
			\ELSIF {\eqref{CGcondition1} is satisfied}
			\STATE {Set $\mathcal{F}^{\textrm{temp}}=\mathcal{F}^{\textrm{curr}}$.}
			\STATE {$\iota=1$.}
			\ENDIF 
			\ENDFOR
			\ENDFOR
			\UNTIL {$\iota=0$.}
			\STATE \textbf{Output}: The AP association $\boldsymbol{\Psi}^{*,(t)}$.
		\end{algorithmic}
    \end{algorithm}
 
    \subsection{Alternative Algorithm}
    
    In order to obtain the final solution, an alternating algorithm is proposed. To start the algorithm, we set up the initial AP-IoT devices association, $\boldsymbol{\Psi}^{*,(0)}$, and the power allocation, $\textbf{\textrm{p}}^{*,(0)}$, as in Subsection \ref{ProposedSolution}. The power allocation problem with fixed rate constraints is resolved first in every loop as in Subsection \ref{Subsection:PowerAllocation}. Next, the AP-IoT devices association is obtained via the coalition game-based method as in Subsection \ref{Subsection:AP_Selection}. The procedure is terminated after checking two following conditions in order: the improvement in the number of satisfied IoT devices is first checked and the improvement in the total network throughput is the second condition. If the number of satisfied IoT devices cannot be increased, the procedure is stopped when the total data throughput cannot increase higher than a sufficient small threshold, say $\epsilon_2$. The detailed procedure is summarized as in Algorithm \ref{AAlgorithm}. In the following Lemma, the convergence of Algorithm \ref{AAlgorithm} will be proved. Convergence speed and running time will be given in Section \ref{ExperimentalEvaluation} to evaluate the computational requirements of Algorithm \ref{AAlgorithm} for the system.

    \begin{algorithm}[t]
		\caption{Alternative algorithm to solve the problem in \eqref{DualProblem_v1} (AA Algorithm)} \label{AAlgorithm} 
		\begin{algorithmic}[1] 
		  \STATE \textbf{Input}:  The system parameters $\mathcal{N}$, $\mathcal{K}$, $B$, $\sigma$, $P^{\max}$, the channel state information $\textbf{h}_n, \forall n\in \mathcal{N}$, and the tolerance $\epsilon_2$.
            \STATE \textbf{Initialize} the AP association $\boldsymbol{\Psi}^{*,(0)}$ as mention early in Subsection \ref{ProposedSolution}. 
            \STATE Calculate $\textbf{\textrm{p}}^{*,(0)}$ based on Algorithm \ref{LogApproximation1}.
            \STATE Calculate $R_n(\boldsymbol{\Psi}^{*,(0)},\textbf{\textrm{p}}^{*,(0)}), \forall n\in \mathcal{N}$ and $R_{\textrm{tot}}((\boldsymbol{\Psi}^{*},\textbf{\textrm{p}}^{*,(0)}))$.
			\STATE Determine the set of satisfied IoT devices $\mathcal{Q}^{*}(\boldsymbol{\Psi}^{*,(0)},\textbf{\textrm{p}}^{*,(0)})$ based on \eqref{QsetDeterminantEq}. 
			\REPEAT  
			\STATE {$t\gets t+1$.} 
			\STATE {Solve the problem \eqref{MaxRwithoutRconstraint_P_v1} to obtain $\textbf{\textrm{p}}^{*,(t)}$ given $\boldsymbol{\Psi}^{*,(t-1)}$.} 
			\STATE {Solve the problem \eqref{MaxRwithoutRconstraint_APselection} to obtain $\boldsymbol{\Psi}^{*,(t)}$ given $\textbf{\textrm{p}}^{*,(t-1)}$.} 
            \STATE {Calculate $R_n(\boldsymbol{\Psi}^{*,(t)},\boldsymbol{\Psi}^{*,(t)}), \forall n\in \mathcal{N}$ and $R_{\textrm{tot}}(\boldsymbol{\Psi}^{(t)},\textbf{\textrm{p}}^{(t)})$.}
			\STATE {Update $\mathcal{Q}^{*,(t)}=\{n|R_n(\textbf{\textrm{p}}^{*,(t)})\geq R_n^{\textrm{thr}}, \forall n\in \mathcal{N}\}$.}
            \STATE Calculate 
            \begin{equation*}
                \begin{aligned}
                    \nabla = \frac{R_{\textrm{tot}}(\boldsymbol{\Psi}^{*,(t)},\boldsymbol{\Psi}^{*,(t)})-R_{\textrm{tot}}(\boldsymbol{\Psi}^{*,(t-1)},\textbf{\textrm{p}}^{*,(t-1)})}{R_{\textrm{tot}}(\boldsymbol{\Psi}^{*,(t-1)},\textbf{\textrm{p}}^{*,(t-1)})}.
                \end{aligned}
            \end{equation*}
			\UNTIL {$(|\mathcal{Q}^{*,(t)}|==|\mathcal{Q}^{*,(t-1)}|)\mbox{ and }(\nabla\leq \epsilon_2)$.} 
			\STATE \textbf{Output}: $\boldsymbol{\Psi}^{*,\textbf{fin}}$,$\textbf{\textrm{p}}^{*,\textbf{fin}}$. 
		\end{algorithmic}
            
    \end{algorithm} 
    
    \begin{lemma}\label{Lemma: Convergence}
        {Given the limited number of APs, $K$, and the limited power budget at APs, $P_k^{\max}, \forall k$, the optimization problem in \eqref{Problemv1} is considered to be infeasible. Therefore, achieving the optimal or a local solution cannot be ensured. The problem in \eqref{Problemv1} is reformulated as the dual objective problem in \eqref{DualProblem_v1} in order to transform the problem in \eqref{Problemv1} into a feasibility and find a good solution to the system under infeasible circumstances. 
        
        Therefore, the problem in \eqref{DualProblem_v1} is solved by Algorithm \ref{AAlgorithm} that prioritizes providing services to as many IoT devices from the IoT devices set $\mathcal{N}$ as possible. Particularly, Algorithm \ref{AAlgorithm} gradually finds and adds potential IoT devices into the set of satisfied IoT devices, $\mathcal{Q}$, by repeatedly reallocating the radio resources via solving the problem in \eqref{MaxRwithoutRconstraint_v1}. Algorithm \ref{AAlgorithm} ensures the non-decrease of the number of satisfied IoT devices. When the number of satisfied IoT devices is fixed, Algorithm \ref{AAlgorithm} also guarantees the non-decrease of the total network throughput and always converges.} 
    \end{lemma}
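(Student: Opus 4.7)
The plan is to establish three claims in order: (i) feasibility and well-posedness of each subproblem encountered by Algorithm~\ref{AAlgorithm}, (ii) monotone non-decrease of $|\mathcal{Q}^{*,(t)}|$ across iterations, and (iii) monotone non-decrease of $R_{\textrm{tot}}(\boldsymbol{\Psi}^{*,(t)},\textbf{\textrm{p}}^{*,(t)})$ once $|\mathcal{Q}^{*,(t)}|$ has stabilized, followed by a boundedness argument that turns the monotone sequences into a convergence statement.

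First, I would appeal to Lemma~\ref{Lemma:QsetSolution} to guarantee that the initial pair $(\boldsymbol{\Psi}^{*,(0)},\textbf{\textrm{p}}^{*,(0)})$ produced by the best path-loss assignment together with Algorithm~\ref{LogApproximation1} is feasible, so $\mathcal{Q}^{*,(0)}$ is well-defined. For the monotonicity of $|\mathcal{Q}^{*,(t)}|$, the key observation is that the subproblem \eqref{MaxRwithoutRconstraint_v1} imposes the equality constraint $R_n(\boldsymbol{\Psi},\textbf{\textrm{p}})=\xi_n^{\textrm{thr}}=R_n^{\textrm{thr}}(1+\tau)>R_n^{\textrm{thr}}$ for every $n\in\mathcal{Q}^{*,(t-1)}$. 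The fixed-point update for satisfied devices in \eqref{FixedPointEq1:SatisfiedIoT devices} is designed precisely to enforce this equality, and the coalition-game update in Algorithm~\ref{AlgorithmCG} only accepts a move that retains condition \eqref{CGcondition1}. Therefore $\mathcal{Q}^{*,(t-1)}\subseteq \mathcal{Q}^{*,(t)}$, which yields $|\mathcal{Q}^{*,(t)}|\geq |\mathcal{Q}^{*,(t-1)}|$, and any freshly satisfied device discovered among the unsatisfied ones only increases $|\mathcal{Q}^{*,(t)}|$.

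Next, I would analyse the evolution of $R_{\textrm{tot}}$ while $|\mathcal{Q}^{*,(t)}|$ is fixed. The power allocation step invokes Algorithm~\ref{LogApproximationAlgorithm_EqRateConstraint}, whose inner loop is a standard interference function iteration; Lemma~\ref{FixedPointConvergence_noRateConstraint} guarantees that it converges to a fixed point that maximises the log-approximated objective in \eqref{MaxRwithEqualRconstraint_v5}, while the outer log-approximation loop, by the same argument as in Lemma~\ref{LogLemma_NoRateConstraint}, produces a non-decreasing sequence of $R_{\textrm{tot}}$ values that converges. The AP association step in Algorithm~\ref{AlgorithmCG} changes an assignment only when condition \eqref{CGcondition2} is strictly satisfied, so $R_{\textrm{tot}}(\boldsymbol{\Psi}^{*,(t)},\textbf{\textrm{p}}^{*,(t)})\geq R_{\textrm{tot}}(\boldsymbol{\Psi}^{*,(t-1)},\textbf{\textrm{p}}^{*,(t-1)})$.

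Finally, I would close with a boundedness-plus-monotonicity argument: $|\mathcal{Q}^{*,(t)}|$ is integer-valued, non-decreasing, and upper bounded by $N$, so it stabilises after at most $N$ outer iterations; once stabilised, $R_{\textrm{tot}}$ is non-decreasing and upper bounded by $\sum_{n\in\mathcal{N}}\log_2(1+P_{k(n)}^{\max}|h_{k(n),n}|^2/\sigma^2)$, hence converges, so the stopping test $\nabla\leq \epsilon_2$ must trigger in finite time. The subtlest point, and the one I would spend the most care on, is showing that the equality constraint \eqref{MaxRwithoutRconstraint_v1:EqualRConstraint} is actually preserved at every update rather than merely targeted: this requires arguing that the $\tau$-margin in $\xi_n^{\textrm{thr}}$ absorbs the residual error from terminating the inner fixed-point iteration at finite tolerance, so that the resulting $R_n$ still satisfies $R_n\geq R_n^{\textrm{thr}}$ exactly, guaranteeing $\mathcal{Q}^{*,(t-1)}\subseteq \mathcal{Q}^{*,(t)}$ without any leakage.
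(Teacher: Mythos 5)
Your proposal is correct and follows essentially the same route as the paper's proof: the equality-rate constraints \eqref{MaxRwithoutRconstraint_v1:EqualRConstraint} in the power step together with conditions \eqref{CGcondition1}--\eqref{CGcondition2} in the coalition-game step give the non-decrease of $|\mathcal{Q}|$ and of $R_{\textrm{tot}}$ once $|\mathcal{Q}|$ is fixed, and boundedness then yields convergence. Your additions (explicit feasibility of the initial point via Lemma~\ref{Lemma:QsetSolution}, the integer-boundedness argument for $|\mathcal{Q}|$, and the remark that the $\tau$-margin in $\xi_n^{\textrm{thr}}$ absorbs finite-tolerance residuals) are refinements of, not departures from, the paper's argument.
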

    \begin{proof}
        The proof is given in Appendix \ref{Proof: Convergence}.
    \end{proof}

	\subsection{Complexity Analysis}
	The optimal solution of the problem in \eqref{Infeasible checker 1} can be achieved via a matrix-inversion computation. The complexity of the modified BB algorithm is in the order of 
		$\mathcal{O}(N_1N^2)$,
	where $N_1$ is the number of iterations needed to achieve the convergence of the modified BB algorithm. Meanwhile, the computational complexity of AA Algorithm is in the order of 
		$\mathcal{O}(N_2N_3KN^3 + N_2N_4 KN^2)$,
	where $N_2$ and $N_3$ are the numbers of iterations for convergence of AA algorithm and DIF-PA algorithm, respectively. Besides, $N_4$ is the maximum number of iterations for the convergence of the coalition game-based AP association algorithm. In the worst case, both the modified BB algorithm and CG-APA algorithm experience the complexity as the exhaustive search with $N^K$ possible of AP-IoT devices association candidates. Due to the consideration of infeasible circumstances, the complexity of the modified BB algorithm can be reduced significantly by pruning many branches. The use of AA algorithm consisting of DIF-PA and CG-APA algorithms requests a higher order of complexity in exchange for a better network throughput. However, it should be noted that the practical running time depends on the network size and specific convergence thresholds, and thus, the time complexity will be evaluated in the next section.
	
    \section{Performance Evaluation}\label{ExperimentalEvaluation}
    In this section, the performance of our proposed algorithms is evaluated via extensive numerical results together with comparisons with several benchmarks. In particular, the proposed AP association is compared with the nearest AP association method. Meanwhile, the performance of the dual interference function-based method is compared with an equal power allocation scheme, where all IoT devices are allocated by a power level. Note that we conducted simulations on a personal computer equipped with the following configuration: [AMD Ryzen $5$ $5600G$ with Radeon Graphics $3.90$ GHz and an installed RAM of $16.0$ GB].
    \subsection{Channel Model and System Parameters}
    Even though the radio environment varies over time and frequency, we adopt the quasi-static model where all the propagation channels follow uncorrelated Rayleigh distribution patterns, which are formulated as \cite{Chien2016},
    \begin{equation}
        h_{k,n}\sim \mathcal{CN} (0,\vartheta_{k,n}), \forall k\in \mathcal{K}, n\in \mathcal{N},
    \end{equation}
    where $\vartheta_{k,n}$ is the large-scale fading coefficient, which is derived from the path loss and the shadowing fading. Specifically, we formulate $\vartheta_{k,n}$ as follows \cite{Chien2016}: 
    \begin{equation}
        \vartheta_{k,n}=\zeta_{k,n}10^{\frac{\psi \sigma_{k,n}}{10}},
    \end{equation}
    where $\psi=7$ is the standard deviation of shadowing and $\sigma_{k,n}$ is distributed as $\mathcal{CN}(0,1)$. Meanwhile, the path loss coefficient $\zeta_{k,n}$ is defined as, \cite{Chien2016},
    \begin{equation}
        \zeta_{n,k}\hspace{0.2cm} [\textrm{dB}]=-(120.9+37.6\log_{10}(d_{n,k} )),
    \end{equation}
    where $d_{n,k}$ is the distance between the $n$-th IoT device and the $k$-th AP. We consider a circle communication area with a radius of $r$ meters. Inside this area, there are $N$ IoT devices and $K$ APs. The distance between two APs should exceed $30$~[m] to avoid multiple APs locating at the same area. We illustrate an example of our network in Fig. \ref{coordinate}, where the random location of APs and IoT devices are shown. The common network parameters are selected based on the 3gpp standard of NB-IoT \cite{3gppStandardizationNBIOT} and are listed in Table~\ref{Table1}.
    \begin{figure}[t]
		\includegraphics[trim=3.4cm 8.5cm 1.5cm 8.5cm, clip=true, width=4.0in]{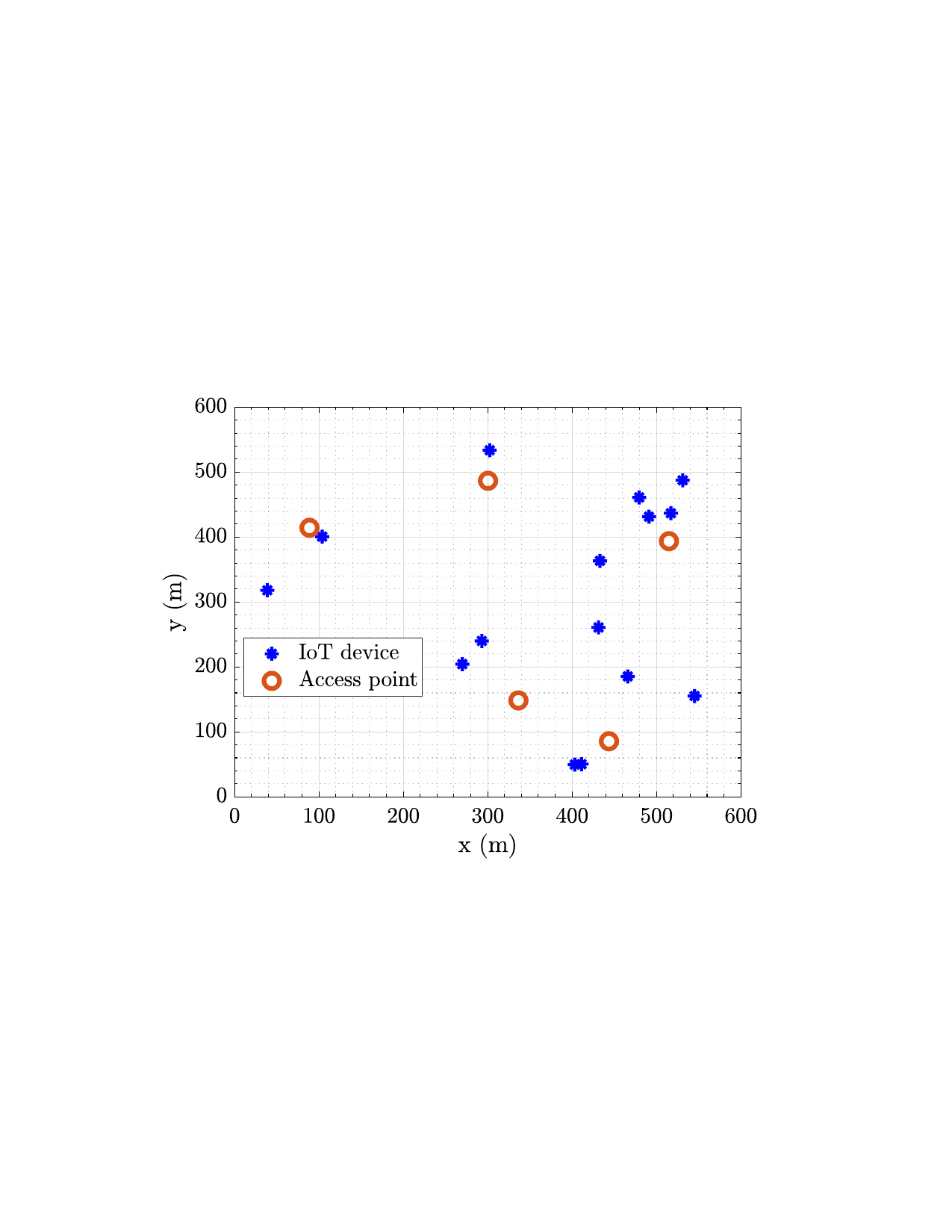}
		\caption{An example coordinates of $5$ APs and $15$ IoT devices.}
		\label{coordinate}
	\end{figure}
    \begin{table}[t] 
    	\caption{System parameters \cite{3gppStandardizationNBIOT}.}
    	\centering
    	\begin{tabular}{l|c}
    		\midrule[0.3pt] 
    		Cell radius (r) & 300 [m]\\ 
    		\midrule 
    		Maximum transmit power of each AP ($P^{\max}$) & 23 [dBm]\\ 
    		\midrule 
    		Bandwidth $B$& 180 [kHz]\\ 
    		\midrule  
    		Power spectral density of the thermal noise ($\sigma$) & -174 [dBm/Hz]\\ 
    		\midrule 
    		Minimum distance between two APs & 30 [m]\\ 
    		\midrule 
    		Lagrangian multiplier step size ($\kappa$) & 0.0001\\ 
    		\midrule 
    		Terminated conditions ($\epsilon_1$ and $\epsilon_{2}$)  & 0.0001\\
    		\bottomrule[1pt] 
    	\end{tabular}
    	\label{Table1}
    \end{table}%
    \subsection{Numerical Results}
    We first introduce two following benchmarks: $i)$ \textit{Nearest AP association (Nearest-APA) \cite{Chien2016}}:  Each IoT device selects the closest AP in terms of geometry. In this way, each IoT device will have the channel gain condition with the lowest path loss; $ii)$ \textit{Equal power allocation (Equal-PA) \cite{Adu2019}}: The transmit power is fixed, and all are equal to each other. To demonstrate the efficiency of proposed algorithms, the following methods will be compared:
    \begin{itemize}
    \item[$i)$] Brute Force: The system applies brute force to find the solution that serves the most number of IoT devices.
    \item[$ii)$] The modified BB algorithm: The system applies the modified BB algorithm to achieve both power allocation and association with AP-IoT devices.
    \item[$iii)$] Proposed DIF-PA + Proposed CG-APA: The system applies the AA algorithm, where the proposed DIF-PA and the proposed CG-APA are used to achieve the power allocation and the AP-IoT devices association, respectively. 
    \item[$iv)$] Proposed DIF-PA + Nearest-APA: The system applies AA algorithm, where the proposed DIF-PA and the nearest CG-APA are used to achieve the power allocation and the AP-IoT devices association, respectively.  
    \item[$v)$] Equal-PA + CG-APA: The system applies AA algorithm, where Equal-PA and CG-APA are used to achieve the power allocation and the AP-IoT devices association, respectively.  
    \item[$vi)$] Equal-PA + Nearest-APA: The system applies AA algorithm, where the proposed DIF-PA and the nearest CG-APA are used to achieve the power allocation and the AP-IoT devices association, respectively.  
    \end{itemize}
    \begin{figure}[t]
    		\includegraphics[trim=3.7cm 8.5cm 3.0cm 9cm, clip=true, width=3.8in]{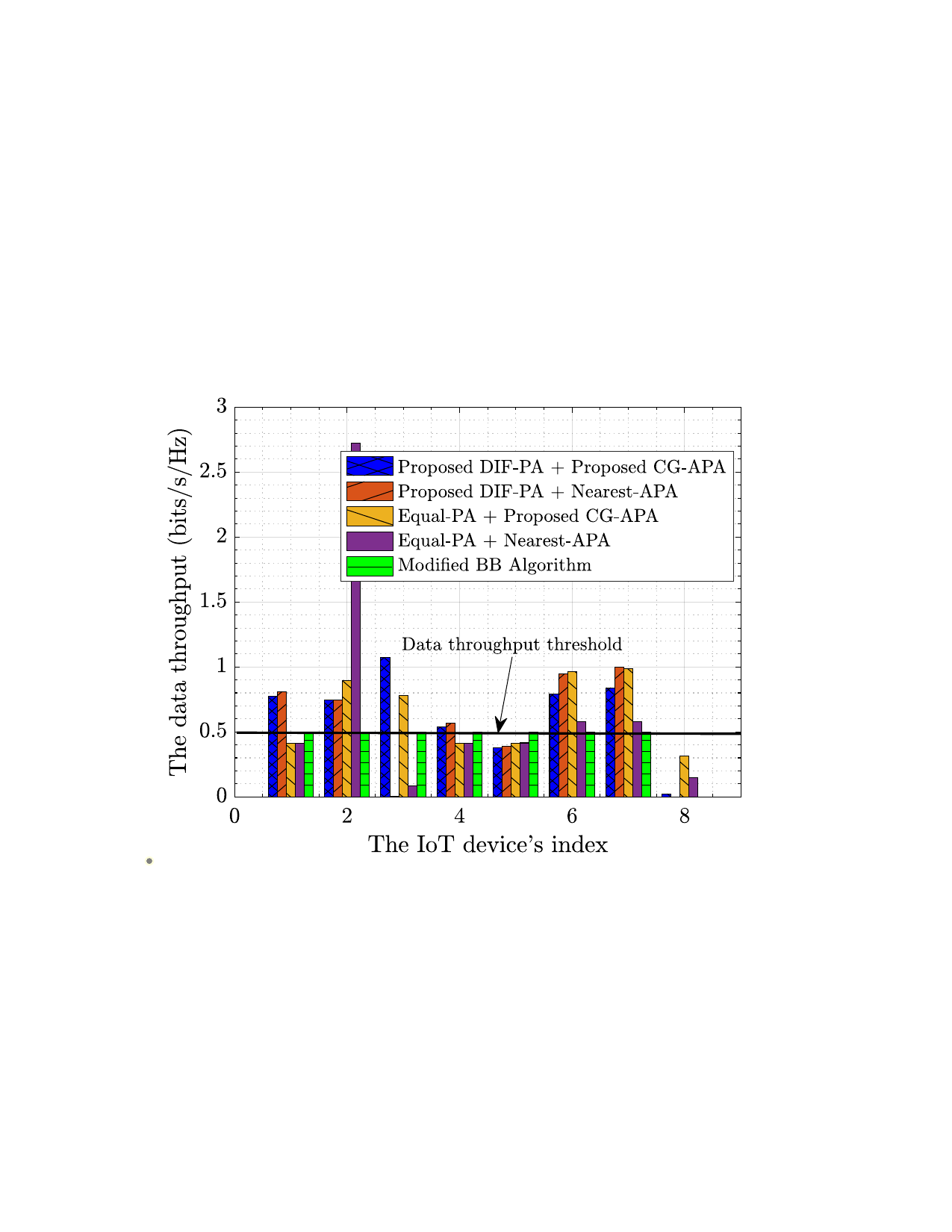}
    		\caption{The achievable rate of IoT IoT devices with different solutions when the system has 8 IoT IoT devices and 3 APs.}
    		\label{fig:UserRate}
            \vspace{-1.5em}
    \end{figure} 
    Fig.~\ref{fig:UserRate} illustrates the detailed data throughput of IoT devices when the system deploys different power allocation solutions and AP association methods. The network has eight IoT devices with the minimum data throughput requirement of $0.5$ (bits/s/Hz) from all IoT devices. When the system applies the modified BB algorithm, seven IoT devices are satisfied the exact requested data throughput while only the $8$-th device is out of service. On the other hand, six IoT devices can satisfy their request when the system combines the proposed DIF-PA and CG-APA. Specifically, one IoT device obtains a data throughput of $1.1$ (bits/s/Hz), and three IoT devices with data throughput higher than $0.7$ (bits/s/Hz). The numerical result implies that AA Algorithm simultaneously optimizes both the number of satisfied IoT devices and the total network throughput. Therein, the total network throughput is around $5.5$ (bits/s/Hz), which outperforms the system applying the modified BB algorithm with a total network throughput of $3.5$ (bits/s/Hz) only. Meanwhile, the benchmark that involves a combination of the proposed DIF-PA and Nearest-APA can serve five IoT devices with a total data throughput of $4.4$ (bits/s/Hz). Overall, the proposed CG-APA can consistently achieve greater performance than Nearest-APA. When the system applies Equal-PA and the proposed CG-APA, only four IoT devices can meet their requirements. Lastly, three IoT devices can meet the requested data throughput, and one obtains a very high data throughput of $2.7$ (bits/s/Hz). However, three IoT devices obtained a data throughput of $0.4$ (bits/s/Hz), in which two out of three IoT devices improved their data throughput over the requirement when the system used the proposed DIF-PA and the proposed CG-APA.

    \begin{figure}
    		\includegraphics[trim=3.6cm 8.5cm 3.5cm 9.0cm, clip=true, width=3.6in]{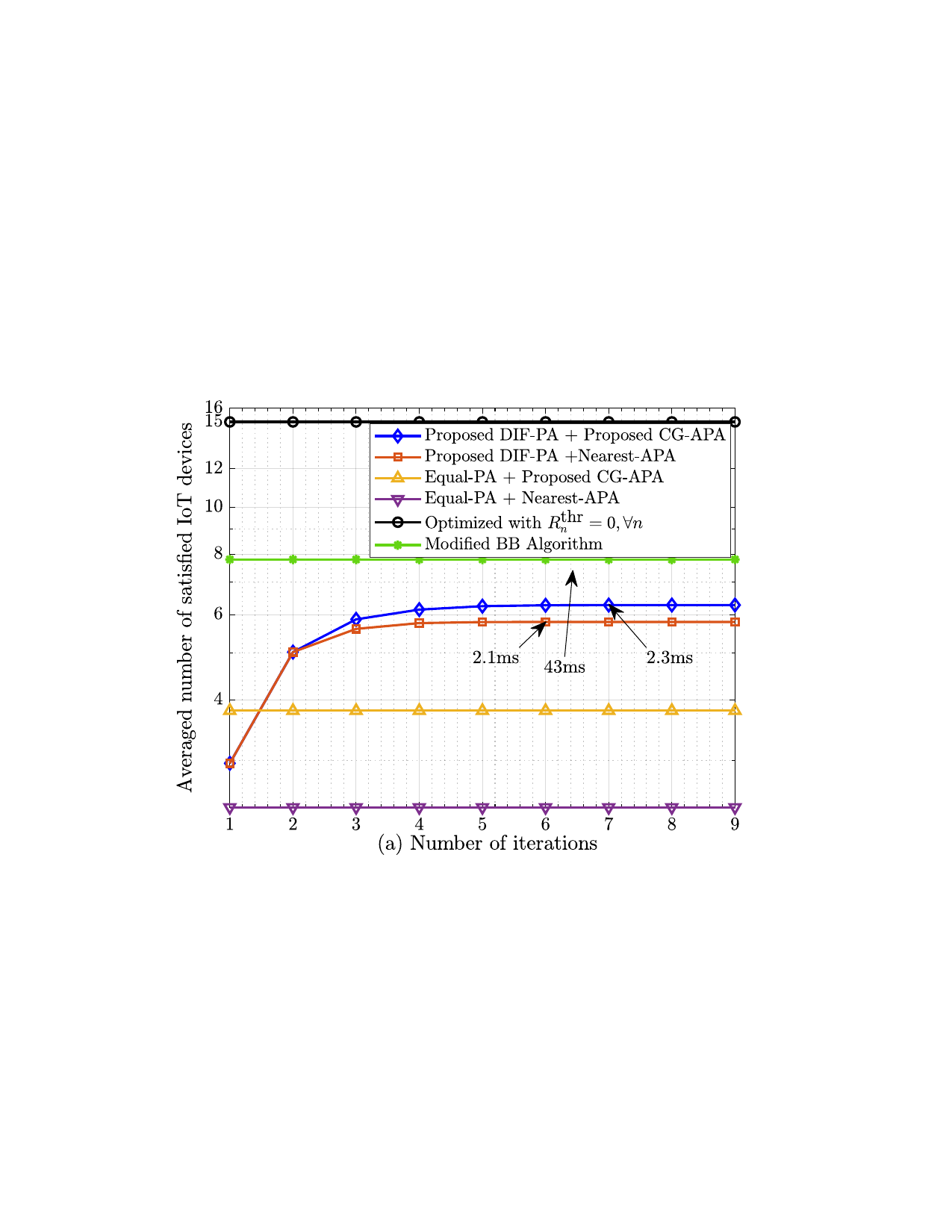}
    		\caption{The convergence in the average number of satisfied IoT devices when the system serves 15 IoT devices with the minimum data throughput $0.5$ (bits/s/Hz) by 5 APs.}
    		\label{fig:Convergence_vs_Iterations_NoSatisfiedIDs}
            \vspace{-1.5em}
    \end{figure} 
    
    \begin{figure}
    		\includegraphics[trim=3.6cm 8.5cm 3.5cm 9.0cm, clip=true, width=3.6in]{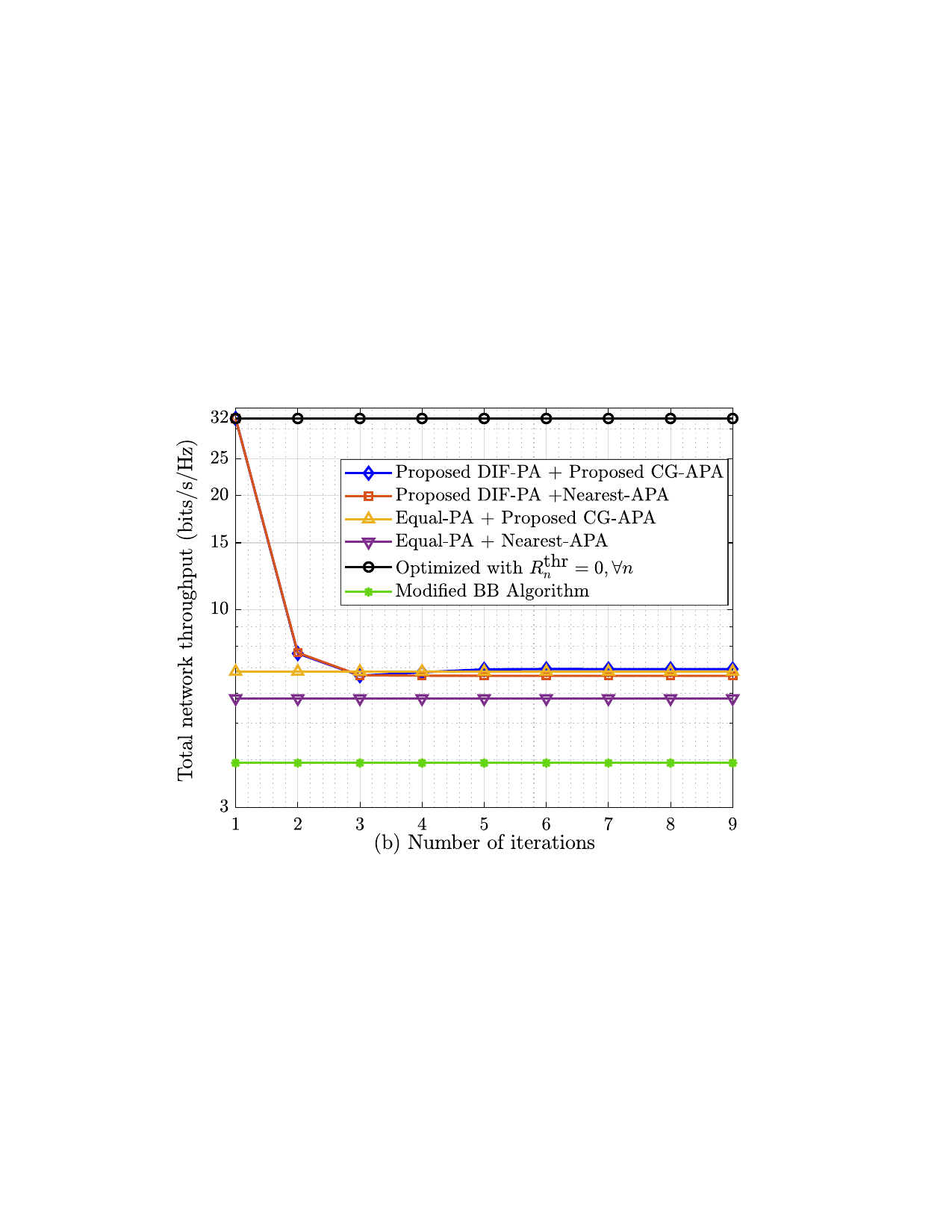}
    		\caption{The convergence in the total network throughput when the system serves 15 IoT devices with the minimum data throughput $0.5$ (bits/s/Hz) by 5 APs.}
    		\label{fig:Convergence_vs_Iterations_TotalRate}
            \vspace{-1.5em}
    \end{figure} 
    Figs.~\ref{fig:Convergence_vs_Iterations_NoSatisfiedIDs} and \ref{fig:Convergence_vs_Iterations_TotalRate} illustrate the convergence regarding the average number of satisfied IoT devices and the total network throughput, respectively. We can observe the first underlying cost of the modified BB algorithm and the proposed DIF-PA from Fig. \ref{fig:Convergence_vs_Iterations_TotalRate} is that the total network throughput experiences a significant reduction in exchange for an increase in the number of satisfied IoT devices. This disadvantage can be seen clearly when the system applies the modified BB algorithm, where the system can satisfy approximately $7.8$ IoT devices with a running time of $43$ms, however, the total data network throughput is only $3.9$ (bits/s/Hz). Meanwhile, the system deploying the proposed DIF-PA and the proposed CG-APA converges in approximately $7$ iterations on average, with a time consumption of around $2.3$ms, resulting in a total network throughput of approximately $7.0$ (bits/s/Hz), which is superior to applying the modified BB algorithm. Alternatively, by combining the proposed DIF-PA with Nearest-APA, the system converges quicker after $6$ iterations with a time consumption of approximately $2.1$ms and a total network throughput of $6.7$ (bits/s/Hz). Upon reaching convergence, the combination of the proposed DIF-PA and CG-APA manages to serve an average of $6.3$ satisfied IoT devices. This performance slightly surpasses that of the proposed DIF-PA coupled with Nearest-APA, which yields an average of approximately $5.8$ satisfied IoT devices. In contrast, when the system focuses solely on AP-IoT device associations, the complexity and calculation requirements reduce noticeably, resulting in minimal time consumption. However, this approach significantly reduces performance. Particularly, the system deploying Equal-PA with the proposed CG-APA method can serve $3.9$ IoT devices with a total network throughput of $6.8$ (bits/s/Hz). The combination of Equal-PA and Nearest-APA performs the poorest, providing services to only $2.4$ IoT devices.
    
    \begin{figure}
    		\includegraphics[trim=3.6cm 8.5cm 3.5cm 9.0cm, clip=true, width=3.5in]{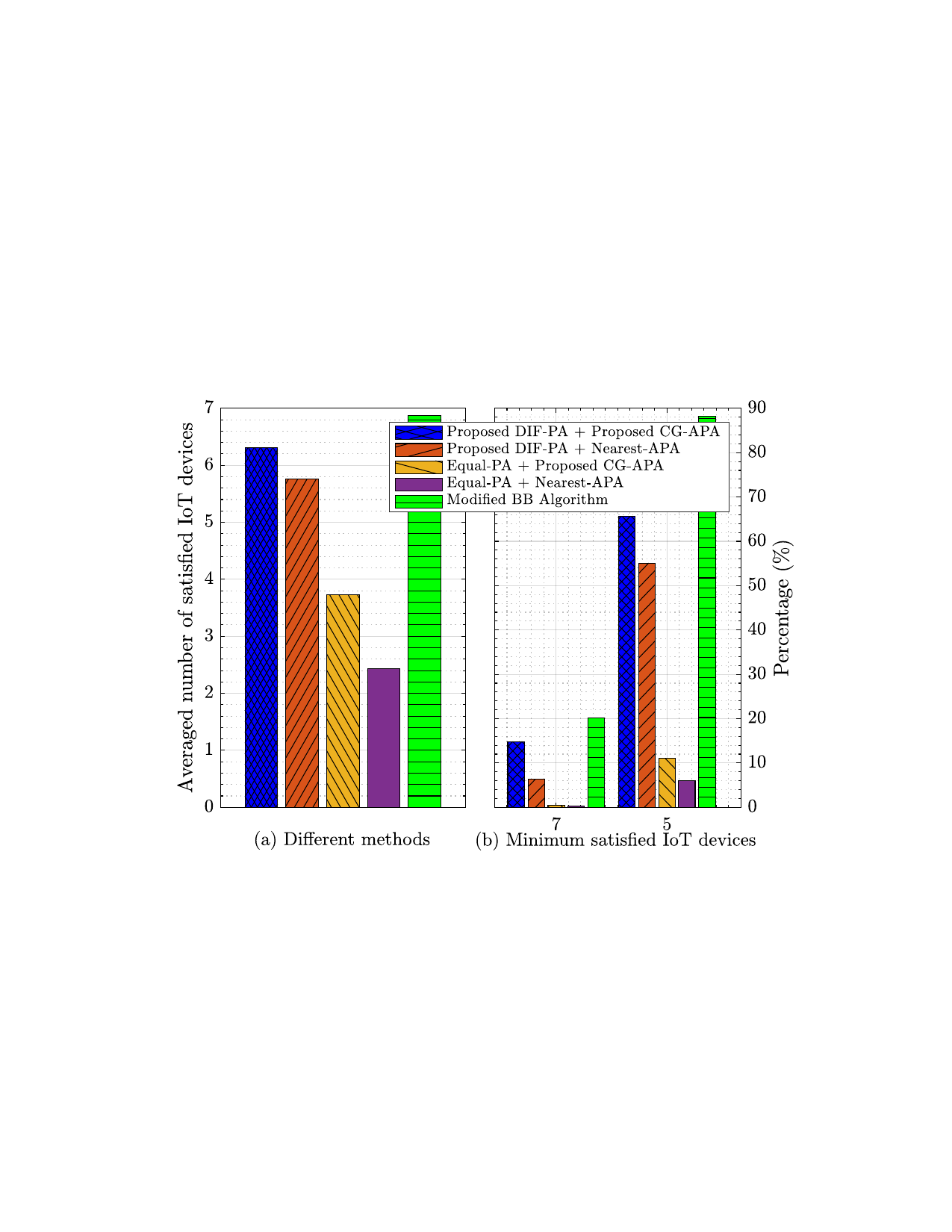}
    		\caption{The left figure shows the averaged number of satisfied IoT devices and the right figure shows the probability that the system can serve at least 5 or 7 IoT devices at least $0.5$ (bits/s/Hz) when the system has 15 IoT devices and 3 APs.}
    		\label{fig:PercentageSatisfiedDevices}
            \vspace{-1.5em}
    \end{figure} 
    
    Fig.~\ref{fig:PercentageSatisfiedDevices}(a) shows the average number of satisfied IoT devices versus benchmarks that are averaged over different $10,000$ realizations of user locations and shadow fading. Meanwhile, Fig.~\ref{fig:PercentageSatisfiedDevices}(b) illustrates the percentage that the system serves at least a certain number of IoT devices, i.e., $5$ or $7$ IoT devices. More specifically, Fig.~\ref{fig:PercentageSatisfiedDevices}(a) displays the superior performance of the modified BB algorithm and the proposed DIF-PA compared to Equal-PA. Approximately $6.8$ IoT devices can be satisfied when the system applies the modified BB algorithm. When the system applies the proposed DIF-PA, approximately $5.7$-$6.3$ IoT devices are served, where the exact number depends on the selected AP association method. Specifically, combining the proposed DIF-PA and CG-APA can serve an average of $6.3$ IoT devices. However, the proposed DIF-PA and Nearest-APA can only satisfy $5.8$ IoT devices on average. In contrast, the Equal-PA only provides service to around $2.4$-$3.8$ IoT devices. In Fig.~\ref{fig:PercentageSatisfiedDevices}(b), $89$\% of the realizations, the modified BB algorithm can serve at least $5$ IoT devices. When the system shares the radio resource via optimizing the dual-objective function, we observe that $66$\% of the realizations, the system can serve at least $5$ IoT devices when utilizing the combination of the proposed DIF-PA and CG-APA. Whereas the combination of the proposed DIF-PA and Nearest-APA only achieves $55$\%. Furthermore, utilizing Equal-PA and the proposed CG-APA can serve more than $5$ IoT devices by their requirements with the probability of less than $0.1$. With more than $7$ IoT devices in the network, Equal-PA with any AP association method cannot serve any IoT device to meet their requirement. The combination of the proposed DIF-PA and CG-APA can help at least $7$ IoT devices achieve a data throughput higher than the threshold with the probability of approximately $0.16$. Meanwhile, by exploiting the proposed DIF-PA and Nearest-APA, the system provides service for more than $7$ IoT devices with the probability of only about $0.07$. Finally, with the modified BB algorithm, the probability of serving at least $7$ IoT devices is approximately $20.5$\%.
    
    \begin{figure}
    		\includegraphics[trim=3.7cm 8.2cm 3.5cm 9cm, clip=true, width=3.5in]{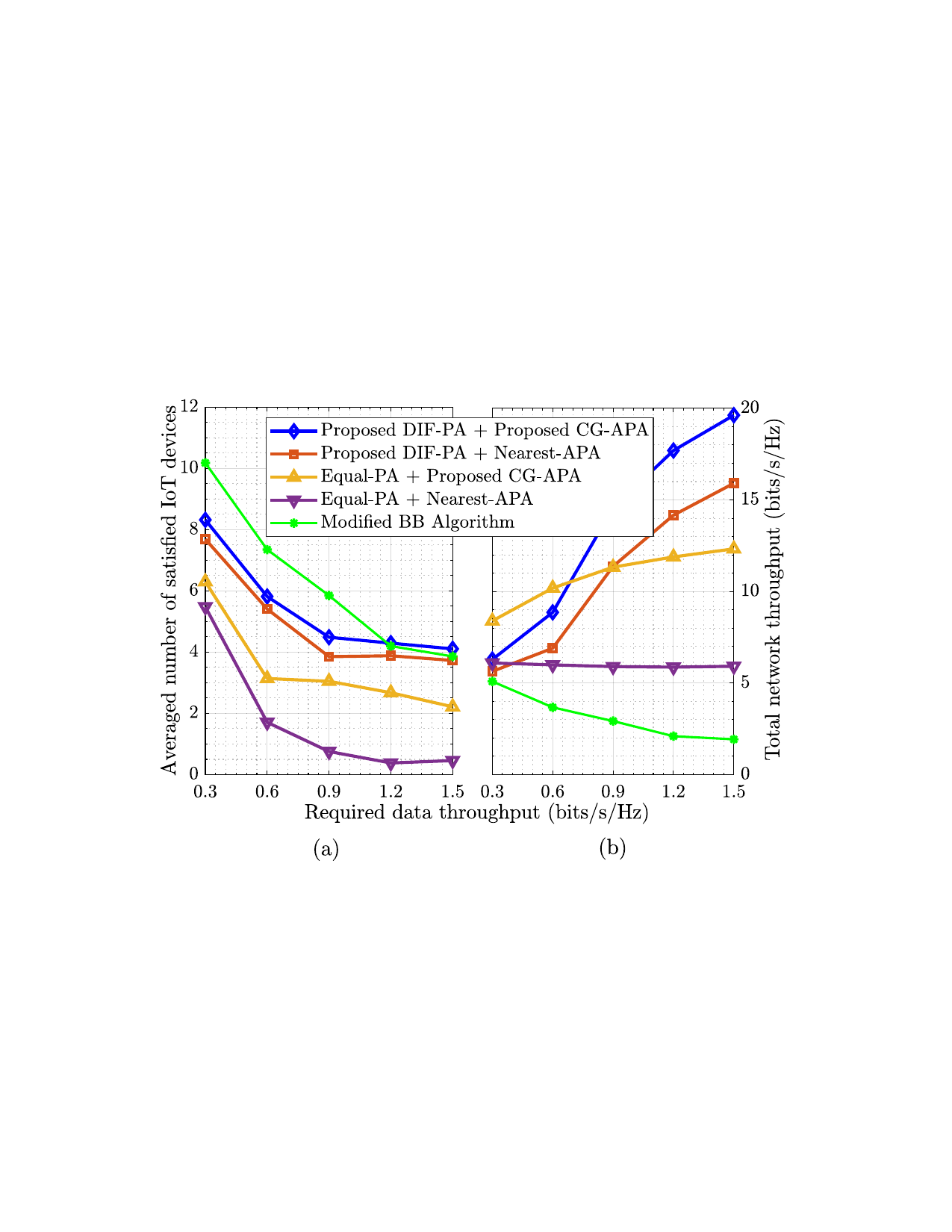}
    		\caption{The number of satisfied IoT devices and the total data throughput versus the change in the required data throughput when the system has 15 IoT devices and 5 APs.}
    		\label{fig:AverageSatisfiedIoT device_n_TotalRate_vs_RateRequirement}
            \vspace{-1.5em}
    \end{figure} 
    Fig.~\ref{fig:AverageSatisfiedIoT device_n_TotalRate_vs_RateRequirement} illustrates network performance as the required data rate varies. It highlights the efficiency of optimizing just the number of satisfied IoT devices versus optimizing both the number of satisfied devices and total network throughput. Using the modified BB algorithm, approximately $10.1$ IoT devices are served when the required data throughput is $0.3$ (bits/s/Hz). This number decreases as the required data throughput increases, dropping to $3.95$ IoT devices at $1.5$ (bits/s/Hz). The proposed DIF-PA method, particularly when combined with the CG-APA approach, shows superior performance compared to Equal-PA. Specifically, the DIF-PA method combined with CG-APA serves an average of $8.2$ IoT devices at $0.3$ (bits/s/Hz), while the Equal-PA method combined with CG-APA serves around $7.7$ devices, and the Equal-PA method combined with Nearest-APA serves about $6.2$ devices. Interestingly, as the required data throughput increases, the modified BB algorithm becomes less effective compared to the DIF-PA method combined with CG-APA, particularly under high data rate demands. This is because the modified BB algorithm adds users sequentially, which becomes inefficient when user requirements are too high. In contrast, the DIF-PA method selectively serves users that are easier to satisfy with higher data rates, making it more effective in challenging conditions. As the required data throughput increases, the number of satisfied devices decreases across all methods, though total network throughput generally increases, except for the Equal-PA method combined with Nearest-APA. At a requirement of $0.6$ (bits/s/Hz), the number of satisfied devices for the Equal-PA method drops rapidly to $2$-$3$ IoT devices, while the DIF-PA method shows a more gradual decline. At $1.5$ (bits/s/Hz), the DIF-PA method combined with CG-APA achieves a total network throughput of approximately $20$ (bits/s/Hz) while serving around $4.2$ devices, compared to $3.8$ devices with the DIF-PA method combined with Nearest-APA. Notably, CG-APA consistently yields about $8$\% more satisfied devices than Nearest-APA.
    

    \begin{table*}
        \caption{Number of satisfied IoT devices and the network sum rate with different network sizes. The left and right values show the number of satisfied IoT devices and the total network throughput, respectively.}
		\centering
		\begin{tabular}{|c|c|c|c|c|c|c|} 
			\toprule[1pt]\midrule[0.3pt] 
			K&N & \text{DIF-PA + CG-APA} & \text{DIF-PA + Nearest-APA} & $\text{Equal-PA + CG-APA}$ & \text{Equal-PA + Nearest-APA} & $\text{Modified BB Algorithm}$\\ 
			\midrule 
			3 & 15 & $4.7/5.3$ & $4.4/4.8$ & $0.8/3.5$ & $0.8/3.2$ & $6.7/3.4$ \\ 
			\midrule 
			4 & 15 & $5.5/6.3$ & $5.3/5.7$ & $1.9/5.1$ & $1.7/4.5$ & $7.4/3.7$ \\ 
			\midrule 
			5 & 15 & $6.3/7.1$ & $5.9/6.4$ & $2.7/6.6$ & $2.5/5.7$ & $7.8/3.9$ \\ 
			\midrule 
			5 & 12 & $5.9/6.9$ & $5.5/6.7$ & $3.5/7.1$ & $3/5.8$ & $7.8/3.9$ \\ 
			\midrule 
			5 & 18 & $6.3/7.2$ & $6.1/6.9$ & $2.4/6.3$ & $2.1/5.4$ & $7.8/3.9$ \\ 
			\midrule 
			5 & 24 & $6.6/7.5$ & $6.4/7.2$ & $1.5/5.6$ & $1.5/5.1$ & $7.8/3.9$ \\ 
			\bottomrule[1pt] 
		\end{tabular}
		\label{Table: NoSatisfiedIDs_vs_APs_IDs}
		\vspace{-5pt}
	\end{table*}
    Table.~\ref{Table: NoSatisfiedIDs_vs_APs_IDs} shows the system performance versus different network settings when the requested data rate of all IoT devices is $0.5$ (bits/s/Hz). It is obvious that both the number of satisfied IoT devices and the total network throughput increase as more APs are added to the network. When the system applies the proposed DIF-PA and the CG-APA, the number of satisfied IoT devices increases from $4.7$ to $6.2$ when the number of APs increases from $3$ and $5$, respectively. These numbers of the proposed methods are smaller than the modified BB algorithm, around $15-20$\% at all settings. However, the proposed methods outperform the modified BB algorithm in terms of total network throughput. While the system combining DIF-PA and CG-APA serves $6.2$ IoT devices with a total of $7.1$ (bits/s/Hz), that of the modified BB algorithm is $7.8$ IoT devices but with only $3.9$ (bits/s/Hz). In comparison with different power allocation methods, DIF-PA presents a significant gap compared to Equal-PA. In detail, applying the Equal-PA and the CG-APA, the number of satisfied IoT devices is only around one-third of the combination of the DIF-PA and the CG-APA with the same set-up. Around $0.8$ and $2.7$ IoT devices are served the requested data rate when the system has $3$ and $5$ APs, respectively. Meanwhile, in terms of IoT devices-AP association, the proposed CG-APA also shows an improvement compared to the nearest-APA. When the number of APs increases from $3$ to $5$, the system applying DIF-PA and CG-APA can serve $4.7$ to $6.2$ IoT devices, while combining DIF-PA with Nearest-APA, the system serves $4.4$ to $5.9$ IoT devices only. When the number of IoT devices joining the network increases, only the system that uses the proposed DIF-PA presents an increasing trend in both the number of satisfied IoT devices and the total network throughput. The network that uses the proposed DIF-PA and the proposed CG-APA can provide the required service to an average of $5.9$ IoT devices, which is better than the network that uses the proposed DIF-PA and Nearest-APA around $8-12$\%. Without the proposed DIF-PA, the system that uses Equal-PA with CG-APA only serves $3.5$ IoT devices when the network exists $12$ IoT devices, and this number decreases quickly to $1.5$ when the number of IoT devices increases to $15$. This situation is because as more IoT devices join the network, the mutual interference will substantially increase as well. Nonetheless, Equal-PA does not adjust the power levels appropriately, resulting in a significant increase in the gap between the proposed DIF-PA and Equal-PA. With the modified BB algorithm, the system maintains the number of satisfied IoT devices at $7.8$ with a total network throughput of $3.9$ (bits/s/Hz) since this method keeps adding IoT devices to the system one after another without considering how many IoT devices are joining the network. However, the total network throughput when the system applies the proposed DIF-PA and CG-APA increases from $6.9$ to $7.5$ (bits/s/Hz) when the number of IoT devices increases from $12$ to $18$.

    \begin{table}[t] 
		\caption{Number of satisfied IoT devices and the running time (millisecond) with different network sizes. The left and right values show the number of satisfied IoT devices, while the right value indicates the running time.}
		\centering
        \resizebox{\columnwidth}{!}{\begin{tabular}{|c|c|c|c|c|c|} 
			\toprule[1pt]\midrule[0.3pt] 
			K&N & $^\text{DIF-PA}_\text{+ CG-APA}$ & $^\text{DIF-PA}_\text{+ Nearest-APA}$ & \text{BB Algorithm} & $\text{Brute Force}$\\ 
			\midrule 
			2 & 8 & $3.7/0.97$ & $3.47/0.94$ & $5.03/1.2$ & $5.2/1.8$ \\ 
			\midrule 
			3 & 8 & $4.6/1.3$ & $4.3/1.1$ & $6/2.6$ & $6.5/64.1$ \\ 
			\midrule 
			3 & 12 & $5.8/1.5$ & $5.5/1.3$ & $6.5/2.7$ & $7.4/91141$ \\ 
			\midrule 
			5 & 15 & $6.3/2.3$ & $5.9/2.1$ & $7.8/41$ & $-/-$ \\ 
			\midrule 
			5 & 20 & $6.5/4.9$ & $6.2/4.4$ & $7.8/41.4$ & $-/-$ \\ 
			\midrule 
			5 & 25 & $6.7/9.5$ & $6.4/8.7$ & $7.8/41.9$ & $-/-$ \\ 
			\bottomrule[1pt] 
		\end{tabular}}
		\label{Table: Computation Time}
		\vspace{-5pt}
	\end{table}%
    Table \ref{Table: Computation Time} shows the system's performance in terms of the number of satisfied IoT devices and the running time when the required data rate of all IoT devices is $0.5$ (bits/s/Hz). The system's running time that applies Equal-PA and Nearest-APA can be ignored due to the lack of computational requirements. Meanwhile, the running time of the proposed CG-APA can be evaluated by observing its combination with the proposed DIF-PA. Thus, the combination of Equal-PA and the proposed CG-APA will not be illustrated. For convenience, we call the combination of the proposed DIF-PA and the CG-APA as the proposed method. We can observe that the brute force method always achieves the highest value because it searches all candidates. The number of IoT devices the system can serve when applying the brute force method is higher than those of the proposed method and the BB algorithm, approximately $20$\% and $8$\%. However, the computational cost makes this gap impractical. The running time of the brute force method increases dramatically. For instance, with $2$ APs and $8$ IoT devices, the running time is only around $2$ms and exponentially increases to $64$ms when the system has $3$ APs. The running time becomes unacceptable, around $91,141$ms, when the system has $3$ APs and $12$ IoT devices. Meanwhile, with the same setup, the running times of the proposed method and the BB algorithm are only $1.7$ms and $2.7$ms, respectively. As the network expands, the brute force method's running time is too long, leading us to disregard it in further comparisons. In comparison between the proposed method and the BB algorithm,  both methods request more running time when more IoT devices join the networks. However, the proposed method is significantly more time-efficient. For example, with $5$ APs, the running time increases from $2.3$ms for 15 IoT devices to $9.5$ms for 25 devices. When the system uses Nearest-APA combined with DIF-PA, the running time is reduced by about $5$\% compared to the proposed method. Meanwhile, the running time of the BB algorithm highly depends on the number of APs since the number of nodes that the BB tree must search at each level increases exponentially with the number of APs. The running time of the BB algorithm is $2.7$ms with $12$ IoT devices and $3$ APs in the network and increases to $41$ms with $15$ IoT devices and $5$ APs in the network.
    
    \begin{figure}
    		\includegraphics[trim=3.6cm 8.3cm 3.5cm 9.0cm, clip=true, width=3.6in]{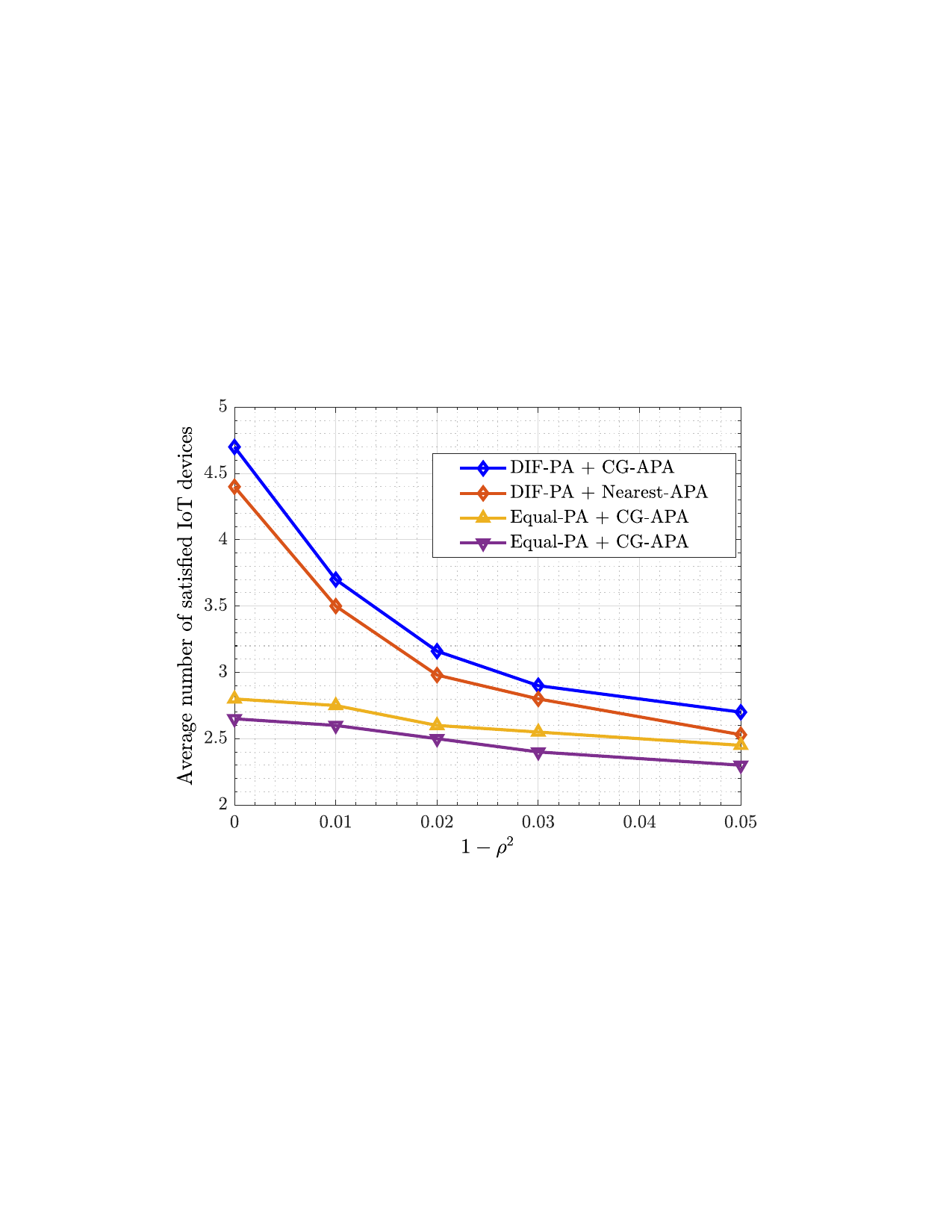}
    		\caption{Number of satisfied IoT devices when the CSI is imperfect.}
    		\label{fig: ImperfectChannelEvaluation}
            \vspace{-1.5em}
    \end{figure} 

    \begin{table*}[t] 
		\caption{Number of satisfied IoT devices (Num-sat-IDs), the total network throughput (Tot-throughput), and the running time (second) with large network sizes.}
		\centering
		\resizebox{\textwidth}{!}{\begin{tabular}{|c|c|c|c|c|c|c|c|c|c|c|} 
			\toprule[1pt]\midrule[0.3pt] 
			K&N & \multicolumn{3}{|c|}{${\mbox{DIF-PA + CG-APA}}$} & \multicolumn{3}{|c|}{${\mbox{DIF-PA + Nearest-APA}}$} & \multicolumn{3}{|c|}{${\mbox{Modified BB Algorithm}}$} \\ 
            \cline{3-11}
            & & $\mbox{Num-sat-IDs}$ & $\mbox{Tot-throughput}$ & $\mbox{Running time}$ & $\mbox{Num-sat-IDs}$ & $\mbox{Tot-throughput}$ & $\mbox{Running time}$ & $\mbox{Num-sat-IDs}$ & $\mbox{Tot-throughput}$ & $\mbox{Running time}$ \\
			\midrule 
			20&50 & $14.4$ & $41.4$ & $0.07$ & $10.8$ & $30.1$ & $0.06$ & $19.74$ & $19.74$ & $0.7$  \\ 
			\midrule 
			30&100 & $22.1$ & $57.9$ & $0.61$ & $17.8$ & $44.8$ & $0.43$ & $29.3$ & $29.3$ & $6.2$  \\ 
			\midrule 
			50&100 & $28.7$ & $81.4$ & $0.67$ & $22.7$ & $67.9$ & $0.49$ & $47.1$ & $47.1$ & $21.1$  \\ 
			\midrule 
			50&200 & $35.2$ & $93.3$ & $5.72$ & $27.7$ & $71.9$ & $4.63$ & $47.2$ & $47.2$ & $56.9$  \\ 
			\midrule 
			70&200 & $41.7$ & $114.1$ & $6.75$ & $35.5$ & $97.9$ & $4.82$ & $67.9$ & $67.9$ & $146.9$  \\ 
			\bottomrule[1pt] 
		\end{tabular}}
		\label{Table: Computation Time large network sizes}
	\end{table*}%
    
    To evaluate the performance of the system under imperfect CSI, we consider the model where the estimated channel between the $n$-th IoT device and the $k$-th AP is represented by $h^{'}_{k,n} = \rho h_{k,n}$ where $\rho $ is the imperfect factor, and the channel error is calculated as $\sqrt{1-\rho^2}\mathcal{CN}(0,\vartheta_{k,n})$. Accordingly, we can calculate the SINR of $n$-th IoT device as $\gamma_{n}(\boldsymbol{\Psi},\textbf{\textrm{p}})=\frac{ \sum\limits_{k\in\mathcal{K}}\rho^2|\mu_{k,n}h_{k,n}|^2 P_{n}}{\sum\limits_{n'\neq n, n'\in\mathcal{N}}\sum\limits_{k\in\mathcal{K}}|\mu_{k,n'}h_{k,n}|^2 P_{n'}+\sum\limits_{k\in\mathcal{K}}\mu_{k,n}(1-\rho^2)\vartheta_{k,n}+\sigma^2}$. All proposed approaches to perfect CSI circumstances remain unchanged for imperfect CSI. Fig.~\ref{fig: ImperfectChannelEvaluation} represents the performance of the proposed methods versus the error factor ${1-\rho^2}$. 
    
    We can observe that the performance of all methods decreases when the error factor increases due to the fact that a higher error factor leads to lower received signal strength and higher interference. The performance of the proposed DIF-PA decreases more rapidly compared to Equal-PA under imperfect CSI. This is because the SINR is impacted by the lower signal strength and increased interference due to channel errors. Additionally, fixing the data rate for users, who often have good channels, makes DIF-PA more vulnerable to imperfection, accelerating performance reduction. Conversely, with equal power allocation, the number of satisfied IoT devices decreases linearly as the additional interference from channel estimation error is fixed. Specifically, the number of satisfied IoT devices using the proposed DIF-PA and CG-APA reduces significantly from $4.7$ with ${1-\rho^2} = 0$, the perfect CSI scenario, to $2.7$ with ${1-\rho^2} = 0.05$. Meanwhile, the CG-APA method still outperforms the nearest-APA method by $5$-$10$\%, even under imperfect CSI conditions.


    Table.~\ref{Table: Computation Time large network sizes} provides a comparative analysis of three different methods, including DIF-PA + CG-APA, DIF-PA + Nearest-APA, and the modified BB algorithm, across large network sizes. The table presents key metrics, including the number of satisfied IoT devices (Num-sat-IDs), total network throughput (Tot-throughput), and the running time required by each method. The results indicate that the combination of the DIF-PA and CG-APA methods delivers a strong balance between the number of satisfied IoT devices and total throughput, consistently outperforming the system deploying the DIF-PA and Nearest-APA methods. While the modified BB algorithm achieves slightly higher satisfaction in terms of IoT devices, the system that uses the proposed DIF-PA and CG-APA methods shows a significantly more efficient running time. This efficiency becomes particularly apparent as the network size increases, making the proposed method more scalable than the modified BB algorithm. For example, with $50$ APs and $100$ IoT devices, the DIF-PA and CG-APA method requires only $0.67$ seconds, compared to $21.1$ seconds for the modified BB method. As the number of IoT devices increases to $200$, the running time of all methods grows, but the DIF-PA and CG-APA combination remains efficient, with a running time of $5.72$ seconds for $50$ APs and only increasing slightly to $6.75$ seconds for $70$ APs. In contrast, the modified BB method requires $146.9$ seconds for $70$ APs, illustrating a significant difference in scalability and computational efficiency. The DIF-PA combined with the Nearest-APA method, while providing lower total network throughput and fewer satisfied IoT devices, demonstrates a notably low running time, approximately $20$\% lower than the DIF-PA and CG-APA combination. This trade-off demonstrates the efficiency of the resource-sharing scheme, suggesting that further accelerated methods for the GA can be applied to improve running time while maintaining or even enhancing overall network performance. Note that all simulation results were obtained using a personal computer. In practice, the computation can be significantly faster on a higher-performance computer or cloud infrastructure, reducing the running time.
    
    We recognize that while our proposal has demonstrated the efficiency and strong potential of the resource-sharing scheme, further refinement will be conducted to enhance its practical applicability. Specifically, ongoing research will focus on improving resource-sharing strategies to better adapt to imperfect CSI and large-size networks. These enhancements will ensure that our framework remains robust and effective across a broader range of real-world IoT environments.
    
    \section{Conclusion}\label{Conclusion}
    We investigated the infeasible issue in IoT networks where the QoS requirements cannot simultaneously be satisfied for all the IoT devices. 
    To make the network more practical and robust, we identified the satisfied IoT devices that the system is capable of providing.  Accordingly, with respect to the device association and power allocation, we proposed to solve the two problems, i.e., the number of satisfied IoT devices maximization and the dual-objective problem, including the number of satisfied IoT devices and the total network throughput. The former problem was solved via a modified BB algorithm. Moreover, using the modified BB algorithm, the infeasibility of the system was determined. The dual-objectives problem was then solved with a higher priority for the number of satisfied IoT devices.
    Since the formulated problem is non-convex, an iterative algorithm gradually increases the number of satisfied IoT devices by sharing the power budget from the high data throughput to lower ones.  The device association was attained by applying the coalition game, while a dual fixed-point algorithm was provided to obtain the power allocation. Simulation results demonstrated the efficiency of both solving the maximization of the cardinality of the satisfied set and solving the dual-objective problem. The modified BB algorithm could provide service to more IoT devices but with lower total network throughput and longer running time. In addition, when the system is in extremely harsh conditions, the system applying the proposed DIF-PA was more efficient in both the number of satisfied IoT devices and the total network throughput compared to other methods. In comparison to Equal-PA, the system that uses the proposed DIF-PA serves more IoT devices of up to $50$\%. Our proposal has shown strong resource-sharing efficiency, but there is room for further enhancement. To improve performance under imperfect CSI and in large-scale networks, ongoing research will refine our strategies, ensuring the framework remains robust and adaptable to real-world IoT environments.
    
	\appendices
 	\section{Proof of Lemma \ref{LogLemma_NoRateConstraint}} \label{ProofLemmaLogApproximation}
\textcolor{black}{Let us first denote the optimal solution to problem~\eqref{MaxRwithoutRconstraint_v2} in the $i$-th iteration of Algorithm~\ref{LogApproximationAlgorithm_EqRateConstraint} as $\textbf{\textrm{p}}^*[i]$. Then, we observe the following series of inequalities} 
	\begin{equation}
		\begin{aligned}
			{R}_{\textrm{tot}}(\textbf{\textrm{p}}^*[i-1]) &\stackrel{(a)}{=} \tilde{R}_{\textrm{tot}}(\textbf{\textrm{p}}^*[i-1],\boldsymbol{\alpha}(\textbf{\textrm{p}}^*[i-1]),\boldsymbol{\beta}(\textbf{\textrm{p}}^*[i-1])\\ &\stackrel{(b)}{\leq} \tilde{R}_{\textrm{tot}}(\textbf{\textrm{p}}^*[i],\boldsymbol{\alpha}(\textbf{\textrm{p}}^*[i-1]),\boldsymbol{\beta}(\textbf{\textrm{p}}^*[i-1])\\
            &\stackrel{(c)}{\leq} R_{\textrm{tot}}(\textbf{\textrm{p}}^*[i]).
		\end{aligned}
	\end{equation}
	At $\textbf{\textrm{p}}^*[i-1]$, we have ${R}_{\textrm{tot}}(\textbf{\textrm{p}}^*[i-1]) = \tilde{R}(\textbf{\textrm{p}}^*[i-1])$ since $\boldsymbol{\alpha}$ and $\boldsymbol{\beta}$ are calculated at $\textbf{\textrm{p}}^*[i-1]$ as in \eqref{LogApproximation1}. Next, we optimally solve \eqref{MaxRwithoutEqualRconstraint_v4} to obtain $\textbf{\textrm{p}}^*[i-1])$, and thus $(b)$ is always true. From \cite{Singh2015}, we have $\log_2(1+z)\geq \alpha \log(z) + \beta$ for any $\alpha$ and $\beta$. Therefore, $\log_2(1+\gamma_{n,l}(\textbf{\textrm{p}}^*[i]))\geq \alpha_{n,l}(\textbf{\textrm{p}}^*[i-1]) \log_2(\gamma^{(i)}_{n,l}(\textrm{\textbf{p}}^*[i])) + \beta_{n,l}(\textbf{\textrm{p}}^*[i-1])$. Consequently, ${R}_{\textrm{tot}}(\textbf{\textrm{p}}^*[i])\geq \tilde{R}(\textbf{\textrm{p}}^*[i])$, and the inequatlity $(c)$ is satisfied. In addition, $R_{\textrm{tot}}(\textbf{\textrm{p}})$ is bounded above due to the limited power budget. As a result, solving the problem \eqref{MaxRwithoutRconstraint_v2} using Algorithm \ref{LogApproximationAlgorithm_EqRateConstraint}, the value of $R_{\textrm{tot}}(\textbf{\textrm{p}})$ always increases and converges. The proof is completed.

    
    \section{Proof of Lemma \ref{FixedPointConvergence_noRateConstraint}} \label{Proof:FixedPointConvergence_noRateConstraint}
    \begin{figure*}[t]
        \begin{equation}\label{DerivativeIF}
            \begin{aligned}
                &\frac{\partial I_n(\textbf{\textrm{p}})}{\partial P_n}=\frac{\left(B\alpha_{n,l}\log_{2}(\textrm{e})\right)\underset{{j\neq n, j\in \mathcal{N}}}{\sum}B\alpha_{j,l}\frac{\ln (2) \left(|\boldsymbol{\mu}_j^T\boldsymbol{g}_{n}|^2\right)^2}{\left(\ln (2) \left(\underset{{n'\neq j,n'\in \mathcal{N}}}{\sum}|\boldsymbol{\mu}_{j}^T \boldsymbol{g}_{n'}|^2 P_{n'} + \sigma^2\right)\right)^2}}{\left(\underset{n'\neq n}{\sum} B\alpha_{n'}({\textbf{\textrm{p}}}[i-1])\frac{\sum_{k=1}^K |\mu_{k,n} g_{k,n'}|^2}{\underset{j\neq n'}{\sum}\sum_{k=1}^K|\mu_{k,j}g_{k,n'}|^2P_j+\sigma^2}+\sum_{k=1}^{K}\ln(2)\theta_{k}\mu_{k,n}\right)^2} >0, \forall n\in \mathcal{N}/\mathcal{Q} .\\
                &\frac{\partial I_n(\textbf{\textrm{p}})}{\partial P_n}=0, \forall n\in \mathcal{Q}.
            \end{aligned}
        \end{equation}
        \hrule
        \vspace*{-0.25cm}
    \end{figure*}
    From \cite[Theorem~2]{Yates1995}, three standard interference function conditions include
    \begin{enumerate}
        \item Positivity: $\boldsymbol{I}(\normalfont{\textbf{\textrm{p}}})>0$.
        \item Monotonicity: If $\normalfont{\textbf{\textrm{p}}}\succeq\normalfont{\textbf{\textrm{p}}}'$, then  $\boldsymbol{I}(\normalfont{\textbf{\textrm{p}}})\succeq\boldsymbol{I}(\normalfont{\textbf{\textrm{p}}}')$.
        \item Scalability: For any $\gamma >1$, $\gamma \boldsymbol{I}(\normalfont{\textbf{\textrm{p}}})\succ\boldsymbol{I}(\gamma \normalfont{\textbf{\textrm{p}}})$.
    \end{enumerate}
    We first observe from \eqref{FixedPointEq1:UnsatisfiedIoT devices} and \eqref{FixedPointEq1:SatisfiedIoT devices} that the positivity is always guaranteed due to the non-negative values of power. Next, we take the derivative $\frac{\partial I_n(\textbf{\textrm{p}})}{\partial P_n}>0$ as in \eqref{DerivativeIF}. The derivative of $I_n(\textbf{\textrm{p}}), \forall n\in \mathcal{N}$ is always positive. Therefore, $I_n(\textbf{\textrm
    p}), \forall n\in \mathcal{N}$ is an increasing function. We can deduce that $I_n(\textbf{\textrm{p}}_1) > I_n(\textbf{\textrm{p}}_2)$ if $\textbf{\textrm{p}}_1\succeq\textbf{\textrm{p}}_2$ and the monotonicity property is ensured.
     Finally, for any real number $\gamma>1$, we can easily show that $\gamma I_n(\textbf{\textrm{p}})>I_n(\gamma\boldsymbol{ P}), \forall n\in \mathcal{N}$. Thus, the fixed-point equation in \eqref{FixedPointWithEqRateConstraint} is a standard interference function. Consequently, it converges into a unique solution. Besides, \eqref{FixedPointEq2} is desired from the conditions to optimize the Lagrangian function and provides a sub-optimal solution to the problem in \eqref{MaxRwithEqualRconstraint_v5}. Furthermore, $P_n=\exp(\bar{P}_n)$ is a one-to-one transformation, and thus the convergence point of \eqref{FixedPointEq2} maximize the objective of \eqref{MaxRwithEqualRconstraint_v5}. The proof is completed.

    \section{Proof of Lemma \ref{Lemma: Convergence}}
    \label{Proof: Convergence}
    Given the initial set of the AP association $\boldsymbol{\Psi}^{*,(0)}$, solving the problem in \eqref{MaxRwithoutRconstraint_v2} to achieve $\textbf{\textrm{p}}^{*,(0)}$ and determine the satisfied set at $t=0$, $\mathcal{Q}^{*}(\boldsymbol{\Psi}^{*,(0)},\textbf{\textrm{p}}^{*,(0)})$. Given fixed AP-IoT devices association, Algorithm \ref{LogApproximationAlgorithm_EqRateConstraint} is used to solve the problem in \eqref{MaxRwithoutRconstraint_P_v1} and yields the solution $\textbf{\textrm{p}}^{*,(t)}$. As the constraint \eqref{MaxRwithoutRconstraint_v1:EqualRConstraint} is considered in \eqref{MaxRwithoutRconstraint_P_v1}, we then have 
    	\begin{equation}\label{ConvergenceProof1}
    			|\mathcal{Q}(\boldsymbol{\Psi}^{*,(t-1)},\textbf{\textrm{p}}^{*,(t)})|\geq |\mathcal{Q}(\boldsymbol{\Psi}^{*,(t-1)},\textbf{\textrm{p}}^{*,(t-1)})|.
    	\end{equation}
        After solving the power allocation, the AP-IoT devices association is obtained via Algorithm \ref{AlgorithmCG}. Therein, the updating conditions \eqref{CGcondition1} and \eqref{CGcondition2} guarantees 
        \begin{equation}\label{ConvergenceProof2}
            \begin{cases}
                &|\mathcal{Q}(\boldsymbol{\Psi}^{*,(t)},\textbf{\textrm{p}}^{*,(t)})|\geq |\mathcal{Q}(\boldsymbol{\Psi}^{*,(t-1)},\textbf{\textrm{p}}^{*,(t)})|,\\
                &R_{\textrm{tot}}(\boldsymbol{\Psi}^{*,(t)},\textbf{\textrm{p}}^{*,(t)})\geq R_{\textrm{tot}}(\boldsymbol{\Psi}^{*,(t-1)},\textbf{\textrm{p}}^{*,(t)}).
            \end{cases}
        \end{equation}
        Based on \eqref{ConvergenceProof1} and \eqref{ConvergenceProof2}, we have 
        \begin{equation}\label{ConvergenceProof3}
        	|\mathcal{Q}(\boldsymbol{\Psi}^{*,(t)},\textbf{\textrm{p}}^{*,(t)})|\geq |\mathcal{Q}(\boldsymbol{\Psi}^{*,(t-1)},\textbf{\textrm{p}}^{*,(t-1)})|.
        \end{equation}
        We now guarantee a non-decrease in the number of satisfied IoT devices. When the number of satisfied IoT devices converges, the power allocation and the AP-IoT devices association are still optimized until the total network throughput converges also. Obtaining the power allocation via DIF-PA algorithm ensures the improvement in the total network throughput. Based on \eqref{ConvergenceProof2}, we can ensure the non-decreasing of the total network throughput when the satisfied set is fixed. 
        In addition, all variables are limited by constraints, and thus both the number of satisfied IoT devices and the total network throughput value are bounded. Therefore, the convergence of Algorithm \ref{AAlgorithm} is guaranteed. Besides, due to the consideration of the infeasible problem, the local or global solution is not taken into account, and Algorithm \ref{AAlgorithm} provides a good solution to the problem. The proof is now completed.
    \ifCLASSOPTIONcaptionsoff
    \newpage
    \fi
    
    \bibliographystyle{IEEEtran}
    \bibliography{Bib1}
\end{document}